\newcommand{\id}{\mathrm{id}}
\newcommand{\mat}[1]{\mathbf{Mat}(#1)}
\newcommand{\re}{\mathbf{R}}
\newcommand{\posre}{\mathbf{R}_{\geq 0}}
\newcommand{\complex}{\mathbf{C}}
\newcommand{\rel}{\mathbf{Rel}}
\newcommand{\mon}{\mathbf{Mon}}
\newcommand{\cmon}{\mathbf{C}\mon}
\newcommand{\pcmon}{\mathbf{P}\cmon}
\newcommand{\col}[2]{
 \left[
  \begin{array}{c}
    #1 \\ 
    #2 \\ 
  \end{array}
  \right]}
\newcommand{\row}[2]{
 \left[
  \begin{array}{cc}
    #1 & #2 \\ 
  \end{array}
  \right]}
\newcommand{\colmat}[2]{
 \left(
  \begin{array}{c}
    #1 \\ 
    #2 \\ 
  \end{array}
  \right)}
\newcommand{\rowmat}[2]{
 \left(
  \begin{array}{cc}
    #1 & #2 \\ 
  \end{array}
  \right)}
\newcommand{\rell}[1]{\mathbf{Rel}(#1)}
\begin{document}

\title{Categorical generalization of spectral decomposition}

\author{
Koki Nishizawa\thanks{This work was supported by JSPS KAKENHI Grant Number JP22K11913.}\corresponding\\
Faculty of Informatics\\
Kanagawa University\\
Japan\\
nishizawa@kanagawa-u.ac.jp
\and
Yusuke Ide\\
Department of Mathematics, \\
College of Humanities and Sciences,\\
Nihon University
\and
Norihiro Tsumagari\\
Center for Education and Innovation\\
Sojo University}

\maketitle

\runninghead{K.Nishizawa, Y.Ide, N.Tsumagari}{Categorical generalization of spectral decomposition}

\begin{abstract}
In this paper, we give several equivalent characterizations for a category with finite biproducts and the sum operation of arrows, and called categories satisfying these semiadditive $\cmon$-categories. 
This allow us to give equivalent structures without directly confirming the existence of biproducts.
Moreover, we define a generalized notion of the spectral decomposition in semiadditive $\cmon$-categories.
We also define the notion of a semiadditive $\cmon$-functor that preserves the spectral decomposition of arrows.
Semiadditive $\cmon$-categories and semiadditive $\cmon$-functors
include many examples.
\end{abstract}

\section{Introduction}

In this paper, we generalize the spectral decomposition of matrices in a category theoretical setting.

The diagonalization involves finding the eigenvalues of a given square matrix on a vector space and decomposing the vector space as the orthogonal direct sum of eigenspaces. 
The original square matrix is represented as the product of a diagonal matrix of the eigenvalues, the change-of-basis matrix consisting of the eigenvectors, and its inverse matrix. 
This representation is called the diagonalization of the original square matrix.
The original square matrix can also be represented as the sum of matrices which project vectors onto each eigenspace and scalar multiple with each eigenvalue. 
This second representation is called the spectral decomposition of the original square matrix.

In this paper, we generalize the spectral decomposition of matrices, characterize the essentially same decomposition in not only vector spaces but also various fields such as graph theory, and provide a basis for defining a transformation method preserving the decomposition between different fields.

For that, we use objects and arrows in categories instead of vector spaces and matrices.
The relationship between the original matrix, eigenvalues, and eigenvectors is generalized as commutative diagrams of arrows.
The situation where the original matrix has a spectral decomposition is also generalized as a condition for arrows.
To do this, we need operations on arrows that generalize matrix sums.
Therefore, we formulate categories equipped with such operations, as the semiadditive $\cmon$-category.

A category with all finite biproducts is called a semiadditive category and
a semiadditive category admits a canonical enrichment over commutative monoids,
by the sum operation of arrows~\cite{MacLane:Working}. 
Conversely, any category enriched over commutative monoids which has either finite products or finite coproducts is semiadditive~\cite{Lack_2012}.
In this paper, we give several equivalent characterizations for a category with finite biproducts and the sum operation of arrows, and call categories satisfying these semiadditive $\cmon$-categories. This allows us to give equivalent structures without directly confirming the existence of biproducts.
We then define what the spectral decomposition of an arrow is.

Moreover, we define a generalized notion of the spectral decomposition in semiadditive $\cmon$-categories.
Our general notion of spectral decomposition includes many examples, including 
separation of disconnected directed graphs,
separation of connected multi-valued directed graphs,
separation of connected directed graphs, and
equitable partition of undirected connected graphs.

We also define the notion of a semiadditive $\cmon$-functor that preserves the spectral decomposition of arrows.
This allows the spectral decomposition in one field to be analyzed in another field.
Our notion of semiadditive $\cmon$-functor includes
the extraction of disconnected 2-valued directed graphs,
from connected 4-valued directed graphs.

This paper is organized as follows. 
Section~\ref{sec:semiad} recalls known definition and examples of semiadditive categories.
In Section~\ref{sec:semiadcmon}, we define the notion of a semiadditive $\cmon$-category.
In Section~\ref{sec:decom}, we generalize spectral decomposition of matrices to the decomposition of arrows in semiadditive $\cmon$-categories.
In Section~\ref{sec:functor}, we define the notion of a semiadditive $\cmon$-functor.
Section~\ref{sec:conclusion} summarizes this work and future work.

\section{Semiadditive categories}\label{sec:semiad}

This section recalls known definition and examples of semiadditive categories~\cite{MacLane:Working}. 

\begin{definition}[zero object]
An object is called a {\bf zero object} (or {\bf null object}), if it is both initial and terminal.
\end{definition}

\begin{definition}[zero morphism]
A family $0$ of {\bf zero morphisms} is defined to be an assignment to each objects $x,y$ of an arrow $0_{x,y}\colon\,x\to y$ satisfying that
\begin{enumerate}
\item for any objects $x,y,z$ and any arrow $h\colon\,x\to y$, $0_{y,z}\circ h=0_{x,z}$ and
\item for any objects $x,y,z$ and any arrow $h\colon\,y\to z$, $h\circ 0_{x,y}=0_{x,z}$.
\end{enumerate}
\end{definition}

If both of $0,0'$ are families of zero morphisms, then $0_{x,y}=0'_{x,y}$, because $0_{x,y}=0'_{y,y}\circ 0_{x,y}=0'_{x,y}$.
Therefore, a family of zero morphisms is unique, if it exists.

\begin{theorem}\label{thm:pointedcat}
Let $c$ be an object of a category $C$. Then the following are equivalent.
\begin{enumerate}
\item $c$ is a zero object of $C$.
\item $C$ has a family $0$ of zero morphisms and $c$ is a terminal object of $C$.
\item $C$ has a family $0$ of zero morphisms and $c$ is an initial object of $C$.
\item $C$ has a family $0$ of zero morphisms and $c$ satisfies $0_{c,c}=\id_{c}$.
\end{enumerate}
\end{theorem}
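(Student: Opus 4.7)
The plan is to establish the cycle of implications $1 \Rightarrow 2,3,4$ and each of $2,3,4 \Rightarrow 1$, with the real content concentrated in the observation that the identity at $c$ coincides with $0_{c,c}$ whenever either side makes sense.

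First I would show $1 \Rightarrow 2, 3, 4$. Assuming $c$ is both initial and terminal, I define $0_{x,y}\colon x \to y$ as the composite of the unique $x \to c$ and the unique $c \to y$. Checking the two absorption axioms of a zero-morphism family is then just the uniqueness of maps into the terminal object $c$ (for axiom~1) and out of the initial object $c$ (for axiom~2). Clauses~2 and~3 are then immediate, and for clause~4 I use that $\id_c$ and $0_{c,c}$ are both arrows $c \to c$, hence must agree by terminality (or initiality) of $c$.

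Next I would deal with the converses $2 \Rightarrow 1$ and $3 \Rightarrow 1$. The key sub-step in both is to reduce to clause~4: if $c$ is terminal, then $\id_c$ and $0_{c,c}$ are both maps $c \to c$, so $0_{c,c} = \id_c$; dually if $c$ is initial. Thus both of these implications funnel through $4 \Rightarrow 1$, which is the heart of the theorem.

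The substantive step is therefore $4 \Rightarrow 1$: assuming a family $0$ of zero morphisms with $0_{c,c} = \id_c$, I want to show $c$ is initial and terminal. For initiality, given any $h\colon c \to y$, I compute $h = h \circ \id_c = h \circ 0_{c,c} = 0_{c,y}$ using axiom~2 of the zero-morphism family; this exhibits every $h\colon c \to y$ as equal to the canonical $0_{c,y}$, proving uniqueness. Terminality is symmetric, using axiom~1 in the form $0_{x,c} = 0_{c,c} \circ h = \id_c \circ h = h$ for any $h\colon x \to c$. I expect no real obstacle here; the only point to be careful about is that all the reasoning depends on the pre-existing equation $0_{c,c} = \id_c$, which is precisely why clause~4 is the right common reduction target for the other three.
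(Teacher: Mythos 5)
Your proposal is correct and follows essentially the same route as the paper: zero morphisms are built as composites through the zero object for $1\Rightarrow 2,3$, clauses $2$ and $3$ reduce to clause $4$ by uniqueness of endomorphisms of a terminal (resp.\ initial) object, and $4\Rightarrow 1$ is the computation $h=h\circ\id_c=h\circ 0_{c,c}=0_{c,y}$ (and its dual) using the absorption axioms. No gaps; the argument matches the paper's proof step for step.
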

\begin{proof}
(1. $\Longrightarrow$ 2.) 
By definition of a zero object, $c$ is initial and terminal.
Since $c$ is terminal, there is the unique arrow $\tau_{x}\colon\,x\to c$ for each objects $x$.
Since $c$ is initial, there is the unique arrow $\sigma_{y}\colon\,c\to y$ for each objects $y$.
Let $0_{x,y}$ be $\sigma_{y}\circ\tau_{x}$  for each objects $x,y$.
For any arrow $h\colon\,x\to y$, $0_{y,z}\circ h=\sigma_{z}\circ\tau_{y}\circ h=\sigma_{z}\circ\tau_{x}=0_{x,z}$.
Similarly, for any arrow $h\colon\,y\to z$, $h\circ 0_{x,y}=0_{x,z}$.
Therefore, $0$ is a family of zero morphisms.
(1. $\Longrightarrow$ 3.) 
Similarly, 3 is also implied by 1.
(2. $\Longrightarrow$ 4.) 
By the uniqueness of arrows to terminal object $c$, $0_{c,c}=\id_{c}$.
(3. $\Longrightarrow$ 4.) 
By the uniqueness of arrows from initial object $c$, $0_{c,c}=\id_{c}$.
(4. $\Longrightarrow$ 1.) 
For any object $x$ and arrow $f\colon\,x\to c$, $f=\id_{c,c}\circ f=0_{c,c}\circ f=0_{x,c}$.
Therefore, $c$ is terminal.
Similarly, $c$ is also initial.
\end{proof}

\begin{definition}[pointed category]
A category $C$ is called {\bf pointed}, if $C$ has a zero object $c$ (i.e., when $C$ and $c$ satisfy one (i.e., all) of equivalent conditions of Theorem~\ref{thm:pointedcat}).
\end{definition}

An object $c_{1}\times c_{2}$ is called a {\bf binary product} of $c_{1},c_{2}$ and arrows $\pi_{1}^{c_{1},c_{2}}\colon\,c_{1}\times c_{2}\to c_{1}$ and 
$\pi_{2}^{c_{1},c_{2}}\colon\,c_{1}\times c_{2}\to c_{2}$ are called its {\bf projections}, if for each object $x$, each arrow $f_{1}\colon\,x\to c_{1}$, each arrow $f_{2}\colon\,x\to c_{2}$,
there exists a unique arrow $\col{f_{1}}{f_{2}}\colon\,x\to c_{1}\times c_{2}$ satisfying $\pi_{1}^{c_{1},c_{2}}\circ\col{f_{1}}{f_{2}}=f_{1}$ and $\pi_{2}^{c_{1},c_{2}}\circ \col{f_{1}}{f_{2}}=f_{2}$.

\begin{figure}[h]
 \begin{center}
  \includegraphics[bb=0 0 442 259,width=0.4\columnwidth]{./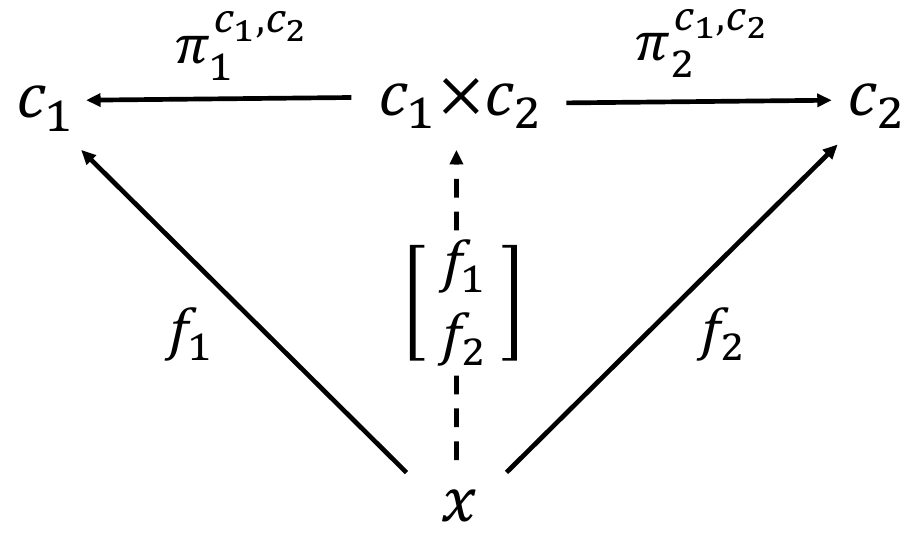}
 \end{center}
\end{figure}

The uniqueness implies $\col{f_{1}}{f_{2}}\circ g=\col{f_{1}\circ g}{f_{2}\circ g}$.
The arrow $\col{\id_{x}}{\id_{x}}\colon\,x\to x\times x$ is also denoted $\Delta_{x}$.

An object $c_{1}+ c_{2}$ is called a {\bf binary coproduct} of $c_{1},c_{2}$ and arrows $\iota_{1}^{c_{1},c_{2}}\colon\,c_{1}\to c_{1}+c_{2}$ and 
$\iota_{2}^{c_{1},c_{2}}\colon\,c_{2}\to c_{1}+c_{2}$ are called its {\bf injections}, if for each object $x$, each arrow $f_{1}\colon\,c_{1}\to x$, each arrow $f_{2}\colon\,c_{2}\to x$,
there exists a unique arrow $\row{f_{1}}{f_{2}}\colon\,c_{1}+c_{2}\to x$ satisfying $\row{f_{1}}{f_{2}}\circ\iota_{1}^{c_{1},c_{2}}=f_{1}$ and $\row{f_{1}}{f_{2}}\circ\iota_{2}^{c_{1},c_{2}}=f_{2}$. 

\begin{figure}[h]
 \begin{center}
  \includegraphics[bb=0 0 442 283,width=0.4\columnwidth]{./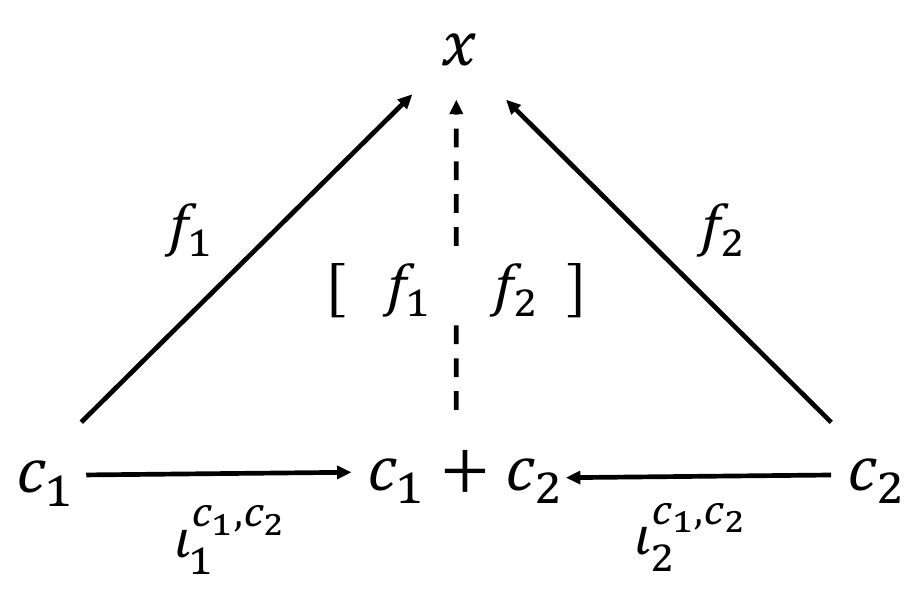}
 \end{center}
\end{figure}

The uniqueness implies $g\circ\row{f_{1}}{f_{2}}=\row{g\circ f_{1}}{g\circ f_{2}}$.
The arrow $\row{\id_{x}}{\id_{x}}\colon\,x+x\to x$ is also denoted $\nabla_{x}$.

For any $f_{1,1}\colon\,x_{1}\to c_{1}$, $f_{1,2}\colon\,x_{1}\to c_{2}$, $f_{2,1}\colon\,x_{2}\to c_{1}$, $f_{2,2}\colon\,x_{2}\to c_{2}$, 
the arrow $\col{\row{f_{1,1}}{f_{2,1}}}{\row{f_{1,2}}{f_{2,2}}}$
$\colon\,x_{1}+x_{2}\to c_{1}\times c_{2}$ equals 
 $\row{\col{f_{1,1}}{f_{1,2}}}{\col{f_{2,1}}{f_{2,2}}}$.
 So, it is also denoted
 $\left[
  \begin{array}{cc}
    f_{1,1} & f_{2,1} \\ 
    f_{1,2} & f_{2,2} \\ 
  \end{array}
  \right]$.

\begin{definition}[biproduct]
Let $C$ be a pointed category.
Let $c_{1}\times c_{2}$ be a binary product of $c_{1},c_{2}$ with projections $\pi_{1}^{c_{1},c_{2}},\pi_{2}^{c_{1},c_{2}}$.
Let $c_{1}+c_{2}$ be a binary coproduct of $c_{1},c_{2}$ with injections $\iota_{1}^{c_{1},c_{2}},\iota_{2}^{c_{1},c_{2}}$.
If the arrow 
 $\left[
  \begin{array}{cc}
    \id_{c_{1}} & 0_{c_{2},c_{1}} \\ 
    0_{c_{1},c_{2}} & \id_{c_{2}} \\ 
  \end{array}
  \right]$ is an isomorphism, then the isomorphic objects $c_{1}\times c_{2}$ and $c_{1}+c_{2}$ are
called the {\bf binary biproduct} of $c_{1},c_{2}$ and denoted $c_{1}\oplus c_{2}$.  
\end{definition}

We can write $\pi_{1},\pi_{2}$ for $\pi_{1}^{c_{1},c_{2}},\pi_{2}^{c_{1},c_{2}}$ when $c_{1},c_{2}$ are clear.

The biproduct $c_{1}\oplus c_{2}$ is both product and coproduct of $c_{1},c_{2}$ with projections and injections satisfying
$\pi_{1}=\row{\id_{c_{1}}}{0_{c_{2},c_{1}}}$, 
$\pi_{2}=\row{0_{c_{1},c_{2}}}{\id_{c_{2}}}$, 
$\iota_{1}=\col{\id_{c_{1}}}{0_{c_{1},c_{2}}}$, and 
$\iota_{2}=\col{0_{c_{2},c_{1}}}{\id_{c_{2}}}$.
The composition of arrows through biproducts satisfies 
$\row{f}{0}\circ\col{h}{k}=f\circ h=\row{f}{g}\circ\col{h}{0}$ and
$\row{0}{g}\circ\col{h}{k}=g\circ k=\row{f}{g}\circ\col{0}{k}$.

For any arrows $f,g\colon\,x\to y$, the arrow 
 $\left[
  \begin{array}{cc}
    f & 0_{x,y} \\ 
    0_{x,y} & g\\ 
  \end{array}
  \right]$ 
is also denoted $f\oplus g\colon\,x\oplus x\to y\oplus y$.

Similarly, biproducts (called {\bf finite biproducts}) of finite sequences $c_{1},c_{2},\ldots,c_{m}$ of objects are defined and denoted $\bigoplus^{m}_{i=1} c_{i}$.  
Particularly, the biproduct of the empty sequence is defined to be a zero object.
A category $C$ has all finite biproducts, if and only if $C$ has a zero object and all binary biproducts.

\begin{definition}[semiadditive category]
A {\bf semiadditive} category is defined to be a tuple $(C,0,\oplus,\pi,\iota)$ such that
\begin{enumerate}
\item $C$ is a category,
\item $0$ is a family of zero morphisms,
\item for each pair of objects $c_{1},c_{2}$ of $C$, $c_{1}\oplus c_{2}$ is a binary biproduct with projections $\pi_{1}^{c_{1},c_{2}},\pi_{2}^{c_{1},c_{2}}$ and injections $\iota_{1}^{c_{1},c_{2}},\iota_{2}^{c_{1},c_{2}}$,
\item $\pi$ is a family of pairs of projections $\pi_{1}^{c_{1},c_{2}},\pi_{2}^{c_{1},c_{2}}$, and
\item $\iota$ is a family of pairs of injections $\iota_{1}^{c_{1},c_{2}},\iota_{2}^{c_{1},c_{2}}$.
\end{enumerate}
\end{definition}

\begin{lemma}\label{lem:innerproduct}
For any semiadditive category $(C,0,\oplus,\pi,\iota)$,
for any arrows $h\colon\,w\to x$, $k\colon\,w\to y$, $f\colon\,x\to z$, $g\colon\,y\to z$ in $C$, 
$\row{f}{g}\circ\col{h}{k}$ equals $\nabla_{z}\circ((f\circ h)\oplus (g\circ k))\circ\Delta_{w}$.
\end{lemma}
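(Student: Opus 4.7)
My plan is to factor both sides of the claimed equation through two auxiliary $2\times 2$ matrix arrows and reduce the identity to a single matrix composition. Introduce
\[ M \defeq \left[\begin{array}{cc} h & 0_{w,x}\\ 0_{w,y} & k \end{array}\right] \colon w\oplus w\to x\oplus y, \qquad N \defeq \left[\begin{array}{cc} f & 0_{y,z}\\ 0_{x,z} & g \end{array}\right] \colon x\oplus y\to z\oplus z. \]
The key intermediate claims are: (a) $\col{h}{k} = M\circ\Delta_w$, (b) $\row{f}{g} = \nabla_z\circ N$, and (c) $N\circ M = (f\circ h)\oplus(g\circ k)$. Combining these with associativity of composition yields
\[\row{f}{g}\circ\col{h}{k} = (\nabla_z\circ N)\circ(M\circ\Delta_w) = \nabla_z\circ(N\circ M)\circ\Delta_w = \nabla_z\circ((f\circ h)\oplus(g\circ k))\circ\Delta_w,\]
which is the desired equality.

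For (a), I would apply the universal property of the product $x\oplus y$: composing on the left with the projections $\pi_1,\pi_2$ and invoking the paper's identities $\row{a}{0}\circ\col{c}{d}=a\circ c$ and $\row{0}{b}\circ\col{c}{d}=b\circ d$ shows that $M\circ\Delta_w$ has projections $h$ and $k$, which agree with those of $\col{h}{k}$. Identity (b) is established dually by composing on the right with the injections $\iota_1,\iota_2$ of the coproduct $x\oplus y$ and applying the same identities. For (c), I would compute the four matrix entries $\pi_i\circ (N\circ M)\circ\iota_j$ for $i,j\in\{1,2\}$; the diagonal entries reduce to $f\circ h$ and $g\circ k$ by the same composition-through-biproduct identities, while the off-diagonal entries factor through a zero morphism and hence vanish. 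Uniqueness of the biproduct factorization then forces $N\circ M$ to equal the claimed matrix.

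\textbf{Main obstacle.} No step is conceptually deep; the whole argument is routine bookkeeping with the biproduct composition identities already recalled in the paper. The most error-prone point is (c), where one must carefully track the sources and targets of the various zero morphisms appearing in $M$ and $N$ when computing the four matrix entries.
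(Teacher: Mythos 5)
Your proposal is correct and follows essentially the same route as the paper's proof: the paper likewise factors $\col{h}{k}$ as $M\circ\Delta_{w}$ and $\row{f}{g}$ as $\nabla_{z}\circ N$ for the same two diagonal matrix arrows, and then computes $N\circ M=(f\circ h)\oplus(g\circ k)$ (column-by-column rather than entry-by-entry, an immaterial difference). All three of your intermediate claims check out against the biproduct composition identities recalled in the paper.
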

\begin{proof}
\[
\begin{array}{ll}
   & \row{f}{g}\circ\col{h}{k}\\
= & \row{
         \row{
            \id_{z}
         }{
            \id_{z}
         }
         \circ
         \col{
            f
         }{
            0_{x,z}
         }
      }{
         \row{
            \id_{z}
         }{
            \id_{z}
         }
         \circ
         \col{
            0_{y,z}
         }{
            g
         }
      }
      \circ
      \col{
         \row{
            h
         }{
            0_{w,x}
         }
         \circ
         \col{
            \id_{w}
         }{
            \id_{w}
         }
      }{
         \row{
            0_{w,y}
         }{
            k
         }
         \circ
         \col{
            \id_{w}
         }{
            \id_{w}
         }
      }\\
= & \row{\id_{z}}{\id_{z}}\circ
 \left[
  \begin{array}{cc}
    f & 0_{y,z} \\ 
    0_{x,z} & g\\ 
  \end{array}
  \right]\circ
 \left[
  \begin{array}{cc}
    h & 0_{w,x} \\ 
    0_{w,y} & k\\ 
  \end{array}
  \right]
\circ\col{\id_{w}}{\id_{w}}\\
= & \row{\id_{z}}{\id_{z}}\circ
\row{
 \left[
  \begin{array}{cc}
    f & 0_{y,z} \\ 
    0_{x,z} & g\\ 
  \end{array}
  \right]\circ\col{h}{0_{w,y}}
}{
 \left[
  \begin{array}{cc}
    f & 0_{y,z} \\ 
    0_{x,z} & g\\ 
  \end{array}
  \right]\circ\col{0_{w,x}}{k}
}
 \circ\col{\id_{w}}{\id_{w}}\\
= & \row{\id_{z}}{\id_{z}}\circ
 \left[
  \begin{array}{cc}
    f\circ h & 0_{w,z} \\ 
    0_{w,z} & g\circ k\\ 
  \end{array}
  \right]
\circ\col{\id_{w}}{\id_{w}}\\
= & \nabla_{z}\circ((f\circ h)\oplus (g\circ k))\circ\Delta_{w}\\
\end{array}
\]  
\end{proof}

The unique arrow $\col{f}{g}$ depends on the choice of a binary product and projections.
Since $\Delta$ is $\col{\id}{\id}$, it also depends on these.
Similarly, $\row{f}{g}$ and $\nabla$ depends on the choice of a binary coproduct and injections.

\begin{example}
We write $\mat{\re}$ for the category defined by following data.
\begin{itemize}
\item Objects are non-negative integers.
\item Arrows from $m$ to $n$ are $n\times m$ real matrices.
\item For $g\colon m \to l$ and $f\colon  n\to m$, the composition $g\circ f\colon  n\to l$ is defined by the matrix multiplication $g\times f$.
\item An identity ${\rm id}_n\colon n\to n$ is the identity matrix $I_n$ of size $n$. 
\end{itemize}
Then, the tuple $(\mat{\re},0,\oplus,\pi,\iota)$ of the following data is a semiadditive category.
\begin{itemize}
\item The zero object is the integer zero. 
\item $0$ is the family of zero matricies as zero morphisms. 
\item For any objects $m$ and $n$, the biproduct $m\oplus n$ of $m$ and $n$ is defined by the sum $m+n$
with projections
$\rowmat{f}{0_{n,m}},\rowmat{0_{m,n}}{g}$ as $\pi_{1},\pi_{2}$
and injections
$\colmat{f^{-1}}{0_{m,n}},\colmat{0_{n,m}}{g^{-1}}$ as $\iota_{1},\iota_{2}$, 
where $f$ is an $m$-dimensional regular matrix, $g$ is an $n$-dimensional regular matrix, and $f^{-1}$ and $g^{-1}$ are their inverses. Here, the round brackets $()$ represents usual matricies.
\end{itemize}
{
In this case, the arrow $\col{f_{1}}{f_{2}}$ is defined by $\col{f_{1}}{f_{2}} = \colmat{f^{-1}f_{1}}{g^{-1}f_{2}}$.
Therefore, 
$
\pi _{1} \col{f_{1}}{f_{2}}
=
\rowmat{f}{0_{n,m}}  \colmat{f^{-1}f_{1}}{g^{-1}f_{2}}
=
f_{1} 
$
and 
$
\pi _{2} \col{f_{1}}{f_{2}}
=
\colmat{0_{n,m}}{g^{-1}}  \colmat{f^{-1}f_{1}}{g^{-1}f_{2}}
=
f_{2} 
$ hold. 
Also the arrow $\row{f_{1}}{f_{2}}$ is defined by $\row{f_{1}}{f_{2}} = \rowmat{f_{1}f}{f_{2}g}$. 
}
\end{example}

\begin{example}
We write $\mat{\posre}$ for the category defined by following data.
\begin{itemize}
\item Objects are non-negative integers.
\item Arrows from $m$ to $n$ are $n\times m$ non-negative real matrices. 
\item The composition of arrows and an identity arrow are defined as same as $\mat{\re}$. 
\end{itemize}
Then, the tuple $(\mat{\posre},0,\oplus,\pi,\iota)$ of the following data is a semiadditive category.
\begin{itemize}
\item The zero object, the family $0$ of zero morphisms, and the biproduct $m\oplus n$ of $m$ and $n$ are also defined similarly to $\mat{\re}$. 
\item The projections $\pi_{1},\pi_{2}$ are given by $\rowmat{f}{0_{n,m}},\rowmat{0_{m,n}}{g}$ and injections $\iota_1,\, \iota_2$ are given by $\colmat{f^{-1}}{0_{m,n}},\colmat{0_{n,m}}{g^{-1}}$, where $f,g$ are non-negative monomial matrices, because the inverses of monomial matrices are also non-negative.
\end{itemize}
\end{example}

\begin{example}
We write $\mat{\complex}$ for the category defined by following data.
\begin{itemize}
\item Objects are non-negative integers.
\item Arrows from $m$ to $n$ are $n\times m$ complex matrices.
\end{itemize}
Then, the semiadditive category $(\mat{\complex},0,\oplus,\pi,\iota)$ is defined similar to $\mat{\re}$.
\end{example}

\begin{example}\label{example-rel}
We write $\rel$ for the category defined by following data.
\begin{itemize}
\item Objects are sets.
\item Arrows from $A$ to $B$ are binary relations (i.e., subsets of $B\times A$).
\item For $g\colon B \to C$ and $f\colon  A\to B$, the composition $g\circ f\colon A\to C$ is defined by the composition of relations, i.e.
\[
g\circ f=\{(c, a)\mid \exists b\in B, (c,b)\in g, (b,a)\in f\}
\]
\item An identity $\id_{X}\colon\,X\to X$ is the identity relation $\{(a,a)\mid a\in X\}$. 
\end{itemize}

Then, the tuple $(\rel,0,\oplus,\pi,\iota)$ of the following data is a semiadditive category.
\begin{itemize}
\item The zero object is the empty set $\emptyset$. 
\item $0$ is the family of empty relations. 
\item For any objects $A$ and $B$, the biproduct $A\oplus B$ of $A$ and $B$ is defined by, for example, 
$$A\oplus B=\{(1,a)\mid a\in A\}\cup\{(2,b)\mid b\in B\}.$$ 
Let $f\colon A\to A$ and $g\colon B\to B$ be bijective mappings, that is, satisfying $f\circ f^\top=f^\top \circ f= \id_{A}$ and $g\circ g^\top=g^\top \circ g= \id_{B}$ where $(-)^\top$ denotes the converse of its relation. 
The projections $\pi_1, \pi_2$ are given by 
$$\pi_{1}= f\circ p_1\mbox{ and } \pi_{2}= g\circ p_2, $$ 
where 
$p_{1}: A\oplus B \to A$ and $p_{2}: A\oplus B \to B$ are defined by  
$$p_{1}= \{(a,(1,a))\mid a\in A\} \mbox{ and } 
p_{2}= \{(b,(2,b))\mid b\in B\}.$$
The injections are given by 
$\iota_{1}= \pi_{1}^\top=p_1^\top \circ f^\top$ and  
$\iota_{2}= \pi_{2}^\top=p_2^\top\circ g^\top$. \\[5pt]
For any $h_1\colon X\to A$ and $h_2\colon X\to B$, the arrows $\col{h_1}{h_2}$ \vspace{5pt} and $\row{h_1}{h_2}$ are given by 
$
\col{h_1}{h_2}=
\iota_1^{A,B} \circ h_1\, \cup \,\iota_2^{A,B}\circ h_2
$ 
and \vspace{5pt}
$
\row{h_1}{h_2}=
h_1\circ \pi_1^{A,B} \,\cup\,h_2\circ \pi_2^{A,B}
$, 
respectively. 
Then for any 
$h_{11}\colon  X_1 \to A$, $h_{12}\colon  X_1 \to B$, $h_{21}\colon  X_2 \to A$, $h_{22}\colon  X_2 \to B$, 
the following equation holds:
$$
\begin{array}{rcl}
 \left[
  \begin{array}{cc}
    h_{11} & h_{12} \\ 
    h_{21} & h_{22}\\ 
  \end{array}
  \right]
&=&
\iota_1^{A,B}\circ h_{11}\circ \pi_1^{A,B}\,
\cup \, \iota_2^{A,B}\circ h_{12}\circ \pi_1^{A,B}\,\\[5pt]
&&\quad \cup \, \iota_1^{A,B}\circ h_{21}\circ \pi_2^{A,B}\,
\cup \, \iota_2^{A,B}\circ h_{22}\circ \pi_2^{A,B}\,.
\end{array}
$$
Note that 
$\Delta_X =\col{\id_X}{\id_X}=\iota_1^{X,X}\cup \iota_2^{X,X}$\vspace{3pt} and 
$\nabla_X =\row{\id_X}{\id_X}=\pi_1^{X,X}\cup \pi_2^{X,X}$ 
for each set $X$ on $\rel$.
\end{itemize}
\end{example}

The next example shows a generalization of $\rel$. 

A {\it Heyting algebra} $(L, \wedge, \vee, \Rightarrow)$ is a lattice $(L, \wedge, \vee)$ equipped with the implication operation $\Rightarrow$ satisfying the property that 
$(x\wedge a)\leq b$ if and only if $x\leq (a\Rightarrow b)$. 
If $L$ is a complete lattice, a Heyting algebra $L$ is called a {\it complete Heyting algebra}. 
A complete Heyting algebra $L$ has the least element $\bot$ and the greatest element $\top$ that satisfies $(a\Rightarrow a)=\top$ for any element $a\in L$. 

An example of complete Heyting algebras is the unit interval $[0,1]$ 
where ${a\wedge b}= {\rm min}\{a,b\}$, $a\vee b= {\rm max}\{a,b\}$ and the implication $\Rightarrow$ satisfies that ${(a\Rightarrow b)}=1$ if $a\leq b$, and $(a\Rightarrow b)=b$ if $a>b$. 

For use in the later discussion, we introduce the 4-element Boolean algebra $B_4=\{0,a,b,1\}$ as an instance of complete Heyting algebras. It is satisfied that $0<a<1$, $0<b<1$, $a\wedge b=0$, $a\vee b=1$, and $a$ and $b$ are incomparable. Then $B_4$ is a complete Heyting algebra where $\bot=0$, $\top=1$, and $(x\Rightarrow y) = \neg x \vee y$. 

\begin{figure}[h]
 \begin{center}
  \includegraphics[bb=0 0 245 247,width=0.2\columnwidth]{./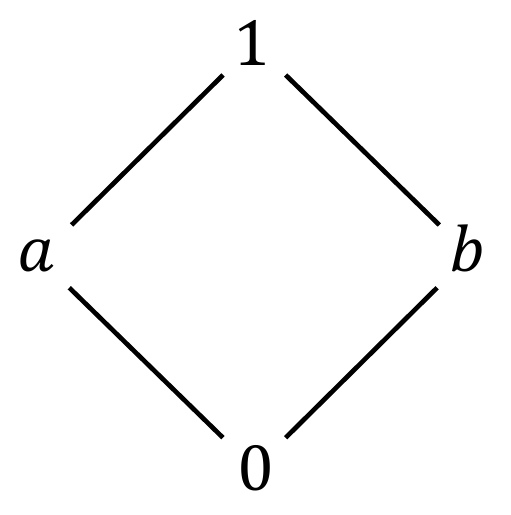}
 \end{center}
\end{figure}

\begin{example}\label{example-Lrel}
Let $L$ be a complete Heyting algebra. We write $\rell{L}$ for the category defined by following data.
\begin{itemize}
\item Objects are sets.
\item Arrows from $A$ to $B$ are $L$-relations, that is, mappings from $B\times A$ to $L$.
\item For $g\colon B \to C$ and $f\colon  A\to B$, the composition $g\circ f\colon A\to C$ is defined by 
\[
(g\circ f) (c,a) =\underset{b\in B}{\bigvee}  \{g(c,b)\wedge f(b,a)\}
\]
\item An identity $\id_{X}\colon\,X\to X$ is defined as $\id_X(b,a) =\top$ if $a=b$, otherwise $\id_X(b,a) =\bot$. 
\end{itemize}
After this, $h^\top$ denotes the converse of a $L$-relation $h\colon A\to B$, that is, 
the arrow $h^\top \colon B\to A$ satisfies $h^\top (a,b) = h(b, a)$ for any $a\in A$ and $b\in B$. 
And $h_1 \sqcup h_2$ denotes the join of two arrows $h_1, h_2 \colon A\to B$, defined by $(h_1 \sqcup h_2)(b, a) = h_1(b, a) \vee h_2(b, a) $ for any $a\in A$ and $b\in B$. 
Because of the distributivities on $L$, it holds that 
$(h_1 \sqcup h_2)\circ f=h_1\circ f \sqcup h_2\circ f$ and 
$g\circ (h_1 \sqcup h_2)=g\circ h_1 \sqcup g\circ h_2$ 
for any  $f\colon X\to A$ and $g\colon B\to Y$.

The tuple $(\rell{L},0,\oplus,\pi,\iota)$ of the following data is a semiadditive category.
\begin{itemize}
\item The zero object is the empty set $\emptyset$. 
\item $0$ is the family of the $L$-relations mapping only to $\bot$.
\item 
For any objects $A$ and $B$, the biproduct $A\oplus B$ of $A$ and $B$ is defined, for example, as same as $\rel$. 
Let $f:A\to A$ and $g:B\to B$ be isomorphisms, that is, satisfying that $f\circ f^\top=f^\top \circ f= \id_{A}$ and $g\circ g^\top=g^\top \circ g= \id_{B}$.  
The projections $\pi_1, \pi_2$ are given by 
$\pi_{1}= f\circ p_1\mbox{ and } \pi_{2}= g\circ p_2, $
where 
$p_{1}: A\oplus B \to A$ and $p_{2}: A\oplus B \to B$ are defined by 
$$
p_{1}(a,(i,x))= \left\{
\begin{array}{ll}
\top & \quad \mbox{if}\,  (i,x)=(1,a)\\
\bot & \quad \mbox{otherwise}
\end{array}
\right.
\quad 
p_{2}(b,(i,x))= \left\{
\begin{array}{ll}
\top & \quad \mbox{if}\,  (i,x)=(2,b)\\
\bot & \quad \mbox{otherwise}
\end{array}
\right.
$$
for each $a\in A$, $b\in B$ and $(i,x)\in A\oplus B$. \\[5pt]
The injections are given by 
$\iota_{1}= \pi_{1}^\top=p_1^\top \circ f^\top$ and  
$\iota_{2}= \pi_{2}^\top=p_2^\top\circ g^\top$. \\[5pt]
For any $h_1:X\to A$ and $h_2:X\to B$, the arrows $\col{h_1}{h_2}$ \vspace{3pt} and $\row{h_1}{h_2}$ are given by 
$
\col{h_1}{h_2}=
\iota_1^{A,B} \circ h_1\, \sqcup \,\iota_2^{A,B}\circ h_2
$ 
and \vspace{5pt}
$
\row{h_1}{h_2}=
h_1\circ \pi_1^{A,B} \,\sqcup\,h_2\circ \pi_2^{A,B}
$.
Note that 
$\Delta_X =\col{\id_X}{\id_X}=\iota_1^{X,X}\sqcup \iota_2^{X,X}$\vspace{3pt} and
$\nabla_X =\row{\id_X}{\id_X}=\pi_1^{X,X}\sqcup \pi_2^{X,X}$ 
for each set $X$ on $\rell{L}$.
\end{itemize}

\end{example}

Using the unit interval $[0,1]$ as a complete Heyting algebra $L$, $[0,1]$-relations are called fuzzy relations. 
Obviously $\rell{L}$ is an expansion of $\rel$. 
In fact, if $L$ is a set $\{0,1\}$ then $\{0,1\}$-relations are equivalent to binary relations. 
Therefore $\rel \simeq \rell{\{0,1\}}$. 

The category $\rell{B_4}$ is also an example of the category of $L$-relations.

\section{Characterization of semiadditive $\cmon$-categories}\label{sec:semiadcmon}

In this paper, we give several equivalent characterizations for a category with finite biproducts and the sum operation of arrows, and call categories satisfying these semiadditive $\cmon$-categories. 

\begin{definition}[$\mon$-category]
A {\bf $\mon$-category} is defined to be a tuple $(C,+,0)$ satisfying the following conditions.
\begin{enumerate}
\item $C$ is a pointed category and $0$ is the family of zero morphisms.
\item $+$ is an assignment to each objects $x,y$ of a binary operator $+_{x,y}$ on the homset $C(x,y)$ from $x$ to $y$.
\item For each objects $x,y$, $+_{x,y}$ is associative and the zero morphism $0_{x,y}$ is its unit.
\item Composition of arrows preserves $+$.
\begin{itemize}
\item For $f,g\colon\,y\to z$ and $h\colon\,x\to y$, $(f+_{y,z}g)\circ h=f\circ h+_{x,z}g\circ h$
\item For $f,g\colon\,x\to y$ and $h\colon\,y\to z$, $h\circ(f+_{x,y}g)=h\circ f+_{x,z}h\circ g$
\end{itemize}
\end{enumerate}
If every $+_{x,y}$ is commutative, then $(C,+,0)$ is called a {\bf $\cmon$-category}.
\end{definition}

A $\mon$-category and a $\cmon$-category can be regarded as a category enriched by 
the monoidal category $\mon$ of monoids and the monoidal category $\cmon$ of commutative monoids~\cite{Kelly1982}, respectively.

\begin{theorem}\label{thm:semiadditive}
Let $C$ be a pointed category and $0$ be the family of zero morphisms.
For each pair of objects $x,y$ of $C$, let $+_{x,y}$ be an binary operator on the homset $C(x,y)$.
For each pair of objects $x,y$ of $C$, let $x\oplus y$ be an object of $C$.
Let $\pi$ be a family of pairs of arrows $\pi_{1}^{x_{1},x_{2}}\colon\,x_{1}\oplus x_{2}\to x_{1},\pi_{2}^{x_{1},x_{2}}\colon\,x_{1}\oplus x_{2}\to x_{2}$.
Let $\iota$ be a family of pairs of arrows $\iota_{1}^{x_{1},x_{2}}\colon\,x_{1}\to x_{1}\oplus x_{2},\iota_{2}^{x_{1},x_{2}}\colon\,x_{2}\to x_{1}\oplus x_{2}$.
Then the following are equivalent.
\begin{enumerate}
\item $(C,0,\oplus,\pi,\iota)$ is semiadditive and for each arrows $f,g\colon\,x\to y$, $f+_{x,y}g=\nabla_{y}\circ(f\oplus g)\circ\Delta_{x}$.
\item $(C,+,0)$ is a $\cmon$-category and $(C,0,\oplus,\pi,\iota)$ is semiadditive.
\item $(C,+,0)$ is a $\mon$-category. For each pair of objects $x_{1},x_{2}$ of $C$,
  $x_{1}\oplus x_{2}$ is the binary product of $x_{1},x_{2}$ with projections $\pi_{1}^{x_{1},x_{2}},\pi_{2}^{x_{1},x_{2}}$.
  For the product $x_{1}\oplus x_{2}$, $\iota_{1}^{x_{1},x_{2}}=\col{\id_{x_{1}}}{0_{x_{1},x_{2}}}$ and $\iota_{2}^{x_{1},x_{2}}=\col{0_{x_{2},x_{1}}}{\id_{x_{2}}}$.
\item $(C,+,0)$ is a $\mon$-category. For each pair of objects $x_{1},x_{2}$ of $C$,
  $x_{1}\oplus x_{2}$ is the binary coproduct of $x_{1},x_{2}$ with injections $\iota_{1}^{x_{1},x_{2}},\iota_{2}^{x_{1},x_{2}}$.
  For the coproduct $x_{1}\oplus x_{2}$, $\pi_{1}^{x_{1},x_{2}}=\row{\id_{x_{1}}}{0_{x_{2},x_{1}}}$ and $\pi_{2}^{x_{1},x_{2}}=\row{0_{x_{1},x_{2}}}{\id_{x_{2}}}$.
\item $(C,+,0)$ is a $\mon$-category and for each pair of objects $x_{1},x_{2}$ of $C$, the conditions (a)-(e) hold.
\begin{enumerate}
  \item $\pi_{1}^{x_{1},x_{2}}\circ\iota_{1}^{x_{1},x_{2}}=\id_{x_{1}}$,
  \item $\pi_{2}^{x_{1},x_{2}}\circ\iota_{2}^{x_{1},x_{2}}=\id_{x_{2}}$,
  \item $\pi_{1}^{x_{1},x_{2}}\circ\iota_{2}^{x_{1},x_{2}}=0_{x_{2},x_{1}}$,
  \item $\pi_{2}^{x_{1},x_{2}}\circ\iota_{1}^{x_{1},x_{2}}=0_{x_{1},x_{2}}$, and
  \item $\iota_{1}^{x_{1},x_{2}}\circ\pi_{1}^{x_{1},x_{2}}+_{x_{1}\oplus x_{2}, x_{1}\oplus x_{2}}\iota_{2}^{x_{1},x_{2}}\circ\pi_{2}^{x_{1},x_{2}}=\id_{x_{1}\oplus x_{2}}$.
 \end{enumerate}
\end{enumerate}
\end{theorem}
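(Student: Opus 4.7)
The plan is to establish the cyclic chain $2 \Rightarrow 1 \Rightarrow 3 \Rightarrow 5 \Rightarrow 4 \Rightarrow 2$, thereby isolating the Eckmann--Hilton-style commutativity argument to the final step. The workhorses throughout will be Lemma~\ref{lem:innerproduct} and the two identities $\Delta_x = \iota_1^{x,x} + \iota_2^{x,x}$ and $\nabla_x = \pi_1^{x,x} + \pi_2^{x,x}$, each of which follows whenever a $\mon$-enrichment is compatible with biproduct data: compose either side with $\pi_j$ or $\iota_j$, apply the matrix identities $\pi_i \iota_i = \id$, $\pi_i \iota_j = 0$ together with bilinearity of $+$, and invoke uniqueness of factorizations through the product or coproduct.

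For $2 \Rightarrow 1$, the core computation is $\nabla_y \circ (f \oplus g) \circ \Delta_x = \row{f}{g} \circ \Delta_x = \row{f}{g}\circ \iota_1 + \row{f}{g}\circ\iota_2 = f + g$, where the first equality is Lemma~\ref{lem:innerproduct} and the second uses $\Delta_x = \iota_1 + \iota_2$. For $1 \Rightarrow 3$, the semiadditive data already supplies the product and the stated injection formulas, and the $\mon$-axioms for $+ := \nabla(f \oplus g)\Delta$ follow by routine bilinear expansion using Lemma~\ref{lem:innerproduct}. For $3 \Rightarrow 5$, items (a)--(d) are immediate from $\pi_j \circ \col{\id}{0}$-style computations, while (e) is obtained by composing $\iota_1\pi_1 + \iota_2\pi_2$ with each $\pi_i$ via bilinearity and (a)--(d) and then invoking uniqueness of product factorizations. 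For $5 \Rightarrow 4$, I will define $\row{f_1}{f_2} := f_1\pi_1 + f_2\pi_2$ and verify the coproduct universal property: existence uses (a)--(d), and uniqueness comes from the rewriting $h = h \circ (\iota_1\pi_1 + \iota_2\pi_2) = (h\iota_1)\pi_1 + (h\iota_2)\pi_2$ made possible by (e); the formulas for $\pi_1, \pi_2$ then follow at once.

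For $4 \Rightarrow 2$, which contains the main obstacle, I will first dualise the $5 \Rightarrow 4$ argument to obtain a product structure on $x_1 \oplus x_2$ and to re-derive all five matrix identities; the biproduct comparison matrix then equals $\id_{x_1 \oplus x_2}$ and is trivially an isomorphism, so the full semiadditive structure is in hand. The remaining task is to promote the given $\mon$-enrichment to a $\cmon$-enrichment, i.e.\ prove commutativity of $+$. For this I plan to apply the Eckmann--Hilton swap argument: let $\sigma_x : x \oplus x \to x \oplus x$ be the involutive swap characterised by $\pi_1 \sigma = \pi_2$ and $\pi_2 \sigma = \pi_1$. The matrix identities then give $\sigma \circ \Delta_x = \Delta_x$, $\nabla_x \circ \sigma = \nabla_x$, and $\sigma \circ (f \oplus g) \circ \sigma = g \oplus f$, so the formula $f + g = \nabla_y \circ (f \oplus g) \circ \Delta_x$, available as in $2 \Rightarrow 1$ from the newly obtained semiadditive structure and whose derivation uses no commutativity, collapses $g + f$ to $f + g$. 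This commutativity upgrade is the technically delicate point; the rest of the proof is routine bilinear bookkeeping.
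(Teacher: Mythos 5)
Your proposal is correct, and it reorganizes the equivalences into a different cycle than the paper's. The paper proves $1\Rightarrow 2\Rightarrow\{3,4\}\Rightarrow 5\Rightarrow 1$: it verifies all the $\cmon$-axioms (including commutativity) in the step $1\Rightarrow 2$, gets $3$ and $4$ from $2$ essentially by the definition of biproduct, and closes the loop by showing in $5\Rightarrow 1$ that the matrix identities (a)--(e) force both the (co)product universal properties and the formula $f+g=\nabla_{y}\circ(f\oplus g)\circ\Delta_{x}$. Your cycle $2\Rightarrow 1\Rightarrow 3\Rightarrow 5\Rightarrow 4\Rightarrow 2$ differs in two substantive ways. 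First, you prove $2\Rightarrow 1$ directly via the uniqueness-of-enrichment argument ($\Delta_{x}=\iota_{1}+\iota_{2}$ by composing with the projections and invoking product uniqueness, then $\row{f}{g}\circ\Delta_{x}=f+g$); the paper never proves this implication head-on and obtains it only by traversing the whole cycle, so your version makes explicit the classical fact that a semiadditive category admits at most one compatible enrichment. Second, you relocate the commutativity upgrade to $4\Rightarrow 2$, deriving it from a bare $\mon$-enrichment plus the reconstructed biproduct structure via the swap $\sigma=\col{\pi_{2}}{\pi_{1}}$; this is mathematically the same manipulation the paper performs inside $1\Rightarrow 2$ (its off-diagonal matrix $\left[\begin{smallmatrix}0&g\\f&0\end{smallmatrix}\right]$ is exactly $\sigma\circ(f\oplus g)$), but placing it where only associativity is assumed makes it transparent that commutativity is forced by the biproducts rather than hypothesized. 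The one place you are lighter than the paper is the claim in $1\Rightarrow 3$ that the $\mon$-axioms for the convolution $\nabla\circ(f\oplus g)\circ\Delta$ are ``routine bilinear expansion'': the unit and bilinearity laws are indeed one-liners from Lemma~\ref{lem:innerproduct}, but associativity needs the coherence of iterated biproducts (the paper's longest single computation, involving $\nabla_{y}\oplus\id_{y}$ and $\Delta_{x}\oplus\id_{x}$), so that step deserves to be written out rather than waved at. With that caveat, every implication in your chain goes through with the stated tools.
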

\begin{proof}
(1. $\Longrightarrow$ 2.)
Assume that $(C,0,\oplus,\pi,\iota)$ is semiadditive and $f+_{x,y}g=\nabla_{y}\circ(f\oplus g)\circ\Delta_{x}$.
The binary operator $+_{x,y}$ on $C(x,y)$ is associative, as follows. 
\[
\begin{array}{ll}
 & (f+_{x,y}g)+_{x,y}h\\
= & \nabla_{y}\circ((f+_{x,y}g)\oplus h)\circ\Delta_{x}\\  
= & \nabla_{y}\circ((\nabla_{y}\circ(f\oplus g)\circ\Delta_{x})\oplus h)\circ\Delta_{x}\\  
= & \nabla_{y}\circ(\nabla_{y} \oplus \id_{y})\circ((f \oplus g) \oplus h)\circ(\Delta_{x} \oplus \id_{x})\circ\Delta_{x}\\
= & \row{\row{f}{g}}{h}\circ(\Delta_{x} \oplus \id_{x})\circ\Delta_{x}\\
= & \nabla_{y}\circ(\id_{y}\oplus \nabla_{y})
  \circ
 \left[
  \begin{array}{cc}
    \row{f}{0_{x,y}} & 0_{x,y} \\ 
 \left[
  \begin{array}{cc}
    0_{x,y} & g \\ 
    0_{x,y} & 0_{x,y}\\ 
  \end{array}
  \right]   
    & \col{0_{x,y}}{h}\\ 
  \end{array}
  \right]  
    \circ
    (\Delta_{x} \oplus \id_{x})\circ\Delta_{x}\\
= & \nabla_{y}\circ(\id_{y}\oplus \nabla_{y})\circ\col{f}{\col{g}{h}}\\
= & \nabla_{y}\circ(\id_{y}\oplus \nabla_{y})\circ(f\oplus(g\oplus h))\circ(\id_{x} \oplus \Delta_{x})\circ\Delta_{x}\\
= & \nabla_{y}\circ(f\oplus (\nabla_{y}\circ(g\oplus h)\circ\Delta_{x}))\circ\Delta_{x}\\  
= & \nabla_{y}\circ(f\oplus (g+_{x,y}h))\circ\Delta_{x}\\  
= & f+_{x,y}(g+_{x,y}h)\\
 \end{array}
\]  

The binary operator $+_{x,y}$ is commutative, as follows.
\[
\begin{array}{ll}
 & f+_{x,y}g\\
= & \nabla_{y}\circ(f\oplus g)\circ\Delta_{x}\\  
= & \row{f}{g}\circ\Delta_{x}\\
= & \nabla_{y}\circ
 \left[
  \begin{array}{cc}
    0_{x,y} & g\\ 
    f & 0_{x,y} \\ 
  \end{array}
  \right]
\circ\Delta_{x}\\
= & \nabla_{y}\circ\col{g}{f}\\
= & \nabla_{y}\circ(g\oplus f)\circ\Delta_{x}\\  
= & g+_{x,y}f\\
\end{array}
\]  

The zero morphism $0_{x,y}$ is the unit of $+_{x,y}$, because for each arrow $f\colon\,x\to y$, 
$f+_{x,y}0_{x,y}=\nabla_{y}\circ(f\oplus 0_{x,y})\circ\Delta_{x}=f$ and $0_{x,y}+_{x,y}f=f$ similarly. 

Composition of arrows preserves $+$, because $(f+_{x,y}g)\circ h=(f\circ h)+_{w,y}(g\circ h)$ for each arrows $f,g\colon\,x\to y, h\colon\,w\to x$  as follows.
\[
\begin{array}{ll}
 & (f+_{x,y}g)\circ h\\
= & \nabla_{y}\circ(f\oplus g)\circ\Delta_{x}\circ h\\  
= & \nabla_{y}\circ\col{f\circ h}{g\circ h}\\
= & \nabla_{y}\circ((f\circ h)\oplus(g\circ h))\circ\Delta_{w}\\  
= & (f\circ h)+_{w,y}(g\circ h)\\
\end{array}
\]  
Similarly, $h\circ(f+_{x,y}g)=(h\circ f)+_{x,z}(h\circ g)$ for each arrows $f,g\colon\,x\to y, h\colon\,y\to z$.
Therefore, $C$ is a $\cmon$-category.

(2. $\Longrightarrow$ 3.) 
By definition of a biproduct.

(2. $\Longrightarrow$ 4.) 
By definition of a biproduct.

(3. $\Longrightarrow$ 5.) 
The conditions (a),(d) are implied by $\iota_{1}^{x_{1},x_{2}}=\col{\id_{x_{1}}}{0_{x_{1},x_{2}}}$ and 
(b),(c) are implied by $\iota_{2}^{x_{1},x_{2}}=\col{0_{x_{2},x_{1}}}{\id_{x_{2}}}$.
By the definition of products, there is the unique arrow $k\colon\,x_{1}\oplus x_{2}\to x_{1}\oplus x_{2}$ satisfying 
$\pi_{1}\circ k=\pi_{1}$ and $\pi_{2}\circ k=\pi_{2}$.
Here, $k=\id_{x_{1}\oplus x_{2}}$ satisfies that and $k=\iota_{1}\circ\pi_{1}+\iota_{2}\circ\pi_{2}$ also satisfies as follows.
\[
\pi_{1}\circ k=\pi_{1}\circ (\iota_{1}\circ\pi_{1}+\iota_{2}\circ\pi_{2})=\pi_{1}\circ\iota_{1}\circ\pi_{1}+\pi_{1}\circ\iota_{2}\circ\pi_{2}=\id_{x_{1}}\circ\pi_{1}+0_{x_{2},x_{1}}\circ\pi_{2}=\pi_{1}
\]
\[
\pi_{2}\circ k=\pi_{2}\circ (\iota_{1}\circ\pi_{1}+\iota_{2}\circ\pi_{2})=\pi_{2}\circ\iota_{1}\circ\pi_{1}+\pi_{2}\circ\iota_{2}\circ\pi_{2}=0_{x_{1},x_{2}}\circ\pi_{1}+\id_{x_{2}}\circ\pi_{2}=\pi_{2}
\]
Therefore, we have the condition (e) $\iota_{1}\circ\pi_{1}+\iota_{2}\circ\pi_{2}=\id_{x_{1}\oplus x_{2}}$.

(4. $\Longrightarrow$ 5.) 
The conditions (a),(c) are implied by $\pi_{1}=\row{\id_{x_{1}}}{0_{x_{2},x_{1}}}$ and 
(b),(d) are implied by $\pi_{2}=\row{0_{x_{1},x_{2}}}{\id_{x_{2}}}$.
By the definition of coproducts, there is the unique arrow $k\colon\,x_{1}\oplus x_{2}\to x_{1}\oplus x_{2}$ satisfying 
$k\circ\iota_{1}=\iota_{1}$ and $k\circ\iota_{2}=\iota_{2}$.
Here, $k=\id_{x_{1}\oplus x_{2}}$ satisfies that and $k=\iota_{1}\circ\pi_{1}+\iota_{2}\circ\pi_{2}$ also satisfies as follows.
\[
k\circ\iota_{1}=(\iota_{1}\circ\pi_{1}+\iota_{2}\circ\pi_{2})\circ\iota_{1}=\iota_{1}\circ\pi_{1}\circ\iota_{1}+\iota_{2}\circ\pi_{2}\circ\iota_{1}=\iota_{1}\circ\id_{x_{1}}+\iota_{2}\circ 0_{x_{1},x_{2}}=\iota_{1}
\]
\[
k\circ\iota_{2}=(\iota_{1}\circ\pi_{1}+\iota_{2}\circ\pi_{2})\circ\iota_{2}=\iota_{1}\circ\pi_{1}\circ\iota_{2}+\iota_{2}\circ\pi_{2}\circ\iota_{2}=\iota_{1}\circ 0_{x_{2},x_{1}}+\iota_{2}\circ\id_{x_{2}}=\iota_{2}
\]
Therefore, we have the condition (e) $\iota_{1}\circ\pi_{1}+\iota_{2}\circ\pi_{2}=\id_{x_{1}\oplus x_{2}}$.

(5. $\Longrightarrow$ 1.) 
Assume that $\pi_{1},\pi_{2},\iota_{1},\iota_{2}$ satisfy the conditions (a)-(e) for any objects $x_{1},x_{2}$.
Then, for any object $y$, for any arrow $f_{1}\colon\,y\to x_{1}$, for any arrow $f_{2}\colon\,y\to x_{2}$,
the arrow $\iota_{1}\circ f_{1}+\iota_{2}\circ f_{2}$ satisfies the following equations.
\[
\pi_{1}\circ(\iota_{1}\circ f_{1}+\iota_{2}\circ f_{2})=\pi_{1}\circ\iota_{1}\circ f_{1}+\pi_{1}\circ\iota_{2}\circ f_{2}=\id_{x_{1}}\circ f_{1}+0_{x_{2},x_{1}}\circ f_{2}=f_{1}
\]
\[
\pi_{2}\circ(\iota_{1}\circ f_{1}+\iota_{2}\circ f_{2})=\pi_{2}\circ\iota_{1}\circ f_{1}+\pi_{2}\circ\iota_{2}\circ f_{2}=0_{x_{1},x_{2}}\circ f_{1}+\id_{x_{2}}\circ f_{2}=f_{2}
\]
Conversely, if $h\colon\,y\to x_{1}\oplus x_{2}$ satisfies $\pi_{1}\circ h=f_{1}$ and $\pi_{2}\circ h=f_{2}$, then $h=\iota_{1}\circ f_{1}+\iota_{2}\circ f_{2}$ as follows.
\[
h=\id_{x_{1}\oplus x_{2}}\circ h=(\iota_{1}\circ\pi_{1}+\iota_{2}\circ\pi_{2})\circ h=\iota_{1}\circ\pi_{1}\circ h+\iota_{2}\circ\pi_{2}\circ h=\iota_{1}\circ f_{1}+\iota_{2}\circ f_{2}
\]
Therefore, $x_{1}\oplus x_{2}$ is a binary product of $x_{1},x_{2}$ with projections $\pi_{1},\pi_{2}$
satisfying $\col{f_{1}}{f_{2}}=\iota_{1}\circ f_{1}+\iota_{2}\circ f_{2}$. 
Similarly, $x_{1}\oplus x_{2}$ is also a binary coproduct of $x_{1},x_{2}$ with injections $\iota_{1},\iota_{2}$ satisfying $\row{f_{1}}{f_{2}}=f_{1}\circ\pi_{1}+f_{2}\circ\pi_{2}$. 
Therefore, $x_{1}\oplus x_{2}$ is a biproduct of $x_{1},x_{2}$ and $(C,0,\oplus,\pi,\iota)$ is semiadditive.
Moreover, we have the following equation.
\[
\begin{array}{ll}
   & \nabla_{y}\circ(f\oplus g)\circ\Delta_{x}\\
= & \row{\id_{y}}{\id_{y}}\circ
 \left[
  \begin{array}{cc}
    f & 0_{x,y} \\ 
    0_{x,y} & g\\ 
  \end{array}
  \right]
\circ\col{\id_{x}}{\id_{x}}\\
= & (\pi_{1}+\pi_{2})\circ(\iota_{1}\circ f\circ\pi_{1}+\iota_{2}\circ g\circ\pi_{2})\circ(\iota_{1}+\iota_{2})\\
= & (\pi_{1}+\pi_{2})\circ(\iota_{1}\circ f\circ\pi_{1}\circ(\iota_{1}+\iota_{2})+\iota_{2}\circ g\circ\pi_{2}\circ(\iota_{1}+\iota_{2}))\\
= & (\pi_{1}+\pi_{2})\circ(\iota_{1}\circ f+\iota_{2}\circ g)\\
= & ((\pi_{1}+\pi_{2})\circ\iota_{1}\circ f+(\pi_{1}+\pi_{2})\circ\iota_{2}\circ g)\\
= & f+g\\
\end{array}
\]  
\end{proof}

\begin{definition}
A {\bf semiadditive $\cmon$-category} is defined to be a tuple $(C,+,0,\oplus,\pi,\iota)$
satisfying one (i.e., all) of equivalent conditions of Theorem~\ref{thm:semiadditive}.
\end{definition}

\section{Spectral decompositions of arrows}\label{sec:decom}

This section defines a generalized notion of the spectral decomposition in semiadditive $\cmon$-categories.
Our general notion of spectral decomposition includes many examples, including 
separation of disconnected directed graphs,
separation of connected multi-valued directed graphs,
separation of connected directed graphs, and
equitable partition of undirected connected graphs.

\begin{theorem}\label{thm:decom}
Let $(C,+,0,\oplus,\pi,\iota)$ be a semiadditive $\cmon$-category,
let $x_{1},x_{2}$ be objects of $C$ and let $\lambda_{1}\colon\,x_{1}\to x_{1},\lambda_{2}\colon\,x_{2}\to x_{2}$ be arrows of $C$.
For object $c$ and arrow $f\colon\,c\to c$, the following are equivalent.
\begin{enumerate}
\item $c$ is a product of $x_{1},x_{2}$ whose projections $\rho_{1}\colon\,c\to x_{1},\rho_{2}\colon\,c\to x_{2}$ satisfy $\rho_{1}\circ f=\lambda_{1}\circ \rho_{1}$ 
and $\rho_{2}\circ f=\lambda_{2}\circ \rho_{2}$.


\begin{figure}[h]
 \begin{center}
  \includegraphics[bb=0 0 427 243,width=0.3\columnwidth]{./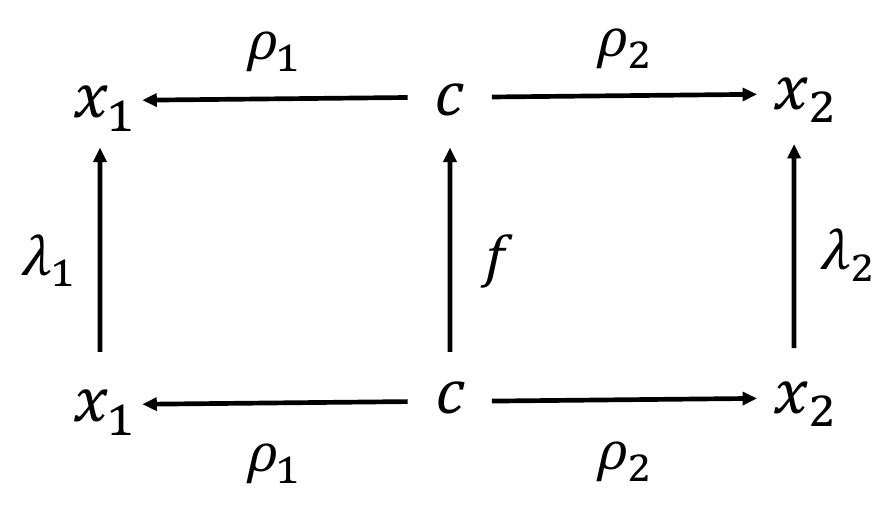}
 \end{center}
\end{figure}

\item $c$ is a coproduct of $x_{1},x_{2}$ whose injections $\kappa_{1}\colon\,x_{1}\to c,\kappa_{2}\colon\,x_{2}\to c$ satisfy $f\circ\kappa_{1}=\kappa_{1}\circ\lambda_{1}$ and $f\circ\kappa_{2}=\kappa_{2}\circ\lambda_{2}$.
\begin{figure}[h]
 \begin{center}
  \includegraphics[bb=0 0 427 239,width=0.3\columnwidth]{./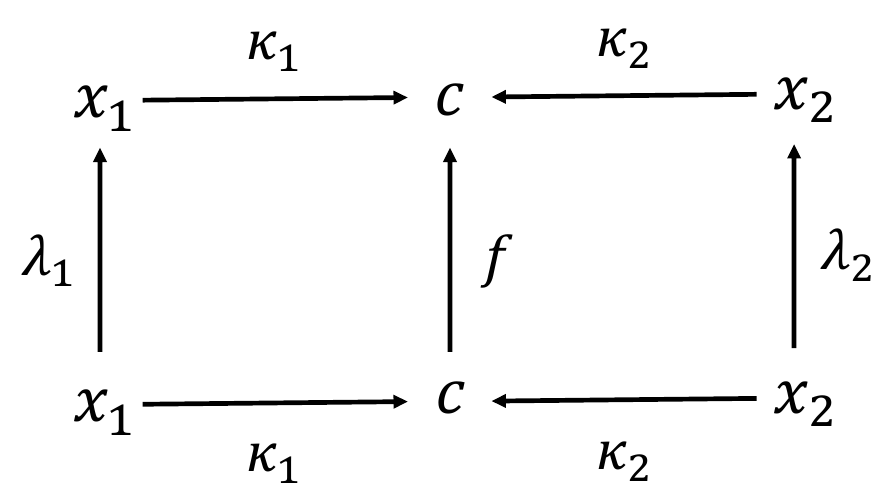}
 \end{center}
\end{figure}

\item There exist arrows 
$\rho_{1}\colon\,c\to x_{1},\rho_{2}\colon\,c\to x_{2},\kappa_{1}\colon\,x_{1}\to c,\kappa_{2}\colon\,x_{2}\to c$ such that
\begin{enumerate}
  \item $\rho_{1}\circ\kappa_{1}=\id_{x_{1}}$, $\rho_{2}\circ\kappa_{1}=0_{x_{1},x_{2}}$,
  \item $\rho_{1}\circ\kappa_{2}=0_{x_{2},x_{1}}$, $\rho_{2}\circ\kappa_{2}=\id_{x_{2}}$,
  \item $\kappa_{1}\circ\rho_{1}+_{c,c}\kappa_{2}\circ\rho_{2}=\id_{c}$, and
  \item $\kappa_{1}\circ\lambda_{1}\circ\rho_{1}+_{c,c}\kappa_{2}\circ\lambda_{2}\circ\rho_{2}=f$.
 \end{enumerate}
 \end{enumerate}
\end{theorem}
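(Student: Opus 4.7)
The plan is to prove the two equivalences $1 \Leftrightarrow 3$ and $2 \Leftrightarrow 3$, using item 3 as a pivot. The key observation is that conditions (a)--(c) in item 3 are exactly the biproduct equations of Theorem~\ref{thm:semiadditive}(5)(a)--(e) transported from the chosen biproduct $(x_1 \oplus x_2, \pi_i, \iota_i)$ to the candidate data $(c, \rho_i, \kappa_i)$, while condition (d) encodes the spectral aspect of the decomposition.

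For $3 \Rightarrow 1$ I would replay the $(5 \Rightarrow 1)$ argument of Theorem~\ref{thm:semiadditive}: given $f_1 \colon y \to x_1$ and $f_2 \colon y \to x_2$, the arrow $h = \kappa_1 \circ f_1 + \kappa_2 \circ f_2$ satisfies $\rho_i \circ h = f_i$ by (a), (b) and bilinearity of composition, while uniqueness follows by precomposing (c) with any competing $h$. This exhibits $c$ as a product with projections $\rho_1, \rho_2$. The spectral condition then reduces to a direct bilinear computation from (d) together with (a), (b):
\[
\rho_1 \circ f = \rho_1 \circ \kappa_1 \circ \lambda_1 \circ \rho_1 + \rho_1 \circ \kappa_2 \circ \lambda_2 \circ \rho_2 = \lambda_1 \circ \rho_1,
\]
and symmetrically for $\rho_2$. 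The implication $3 \Rightarrow 2$ is strictly dual, using the coproduct universal property in place of the product one. Conversely, for $1 \Rightarrow 3$ I would manufacture $\kappa_1, \kappa_2$ using the product structure by setting $\kappa_1 = \col{\id_{x_1}}{0_{x_1, x_2}}$ and $\kappa_2 = \col{0_{x_2, x_1}}{\id_{x_2}}$. Conditions (a), (b) are immediate from the defining equations of $\col{-}{-}$. For (c), both $\id_c$ and $\kappa_1 \circ \rho_1 + \kappa_2 \circ \rho_2$ post-compose with $\rho_i$ to yield $\rho_i$, so the uniqueness in the product forces equality. For (d), the hypothesis $\lambda_i \circ \rho_i = \rho_i \circ f$ combined with (c) gives
\[
\kappa_1 \circ \lambda_1 \circ \rho_1 + \kappa_2 \circ \lambda_2 \circ \rho_2 = (\kappa_1 \circ \rho_1 + \kappa_2 \circ \rho_2) \circ f = f.
\]
The implication $2 \Rightarrow 3$ is again dual.

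The main obstacle is bookkeeping rather than genuine difficulty: the product and coproduct universal-property arguments in $3 \Rightarrow 1$ and $3 \Rightarrow 2$ faithfully recycle the $(5 \Rightarrow 1)$ step of Theorem~\ref{thm:semiadditive}, and condition (d) is engineered so that the bilinear manipulations close up cleanly once (c) is available. No fundamentally new idea is required beyond careful dualization between product and coproduct sides.
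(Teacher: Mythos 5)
Your proposal is correct and follows essentially the same route as the paper's own proof: use (3) as the pivot, obtain $\kappa_1,\kappa_2$ in $1\Rightarrow 3$ from the product universal property (your explicit $\col{\id_{x_1}}{0_{x_1,x_2}}$ is exactly the paper's ``unique arrow satisfying (a)''), derive (c) by uniqueness against $\id_c$, get (d) from (c) and the intertwining hypothesis, and dualize for the coproduct side. No substantive difference from the paper's argument.
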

\begin{proof}
(1. $\Longrightarrow$ 3.) 
Since $c$ is a product of $x_{1},x_{2}$ whose projections $\rho_{1}\colon\,c\to x_{1},\rho_{2}\colon\,c\to x_{2}$,
there exists the unique $\kappa_{1}\colon\,x_{1}\to c$ satisfying the condition (a).
Similarly, there exists the unique $\kappa_{2}\colon\,x_{2}\to c$ satisfying the condition (b).
Both $h=\id_{c}$ and $h=\kappa_{1}\circ\rho_{1}+_{c,c}\kappa_{2}\circ\rho_{2}$ satisfy
$\rho_{1}\circ h=\rho_{1}$ and $\rho_{2}\circ h=\rho_{2}$ as follows.
\[
\rho_{1}\circ(\kappa_{1}\circ\rho_{1}+\kappa_{2}\circ\rho_{2}) = \rho_{1}\circ\kappa_{1}\circ\rho_{1}+\rho_{1}\circ\kappa_{2}\circ\rho_{2} = \id_{x_{1}}\circ\rho_{1}+0_{x_{2},x_{1}}\circ\rho_{2} = \rho_{1}+0_{c,x_{1}}= \rho_{1}
\]
\[
\rho_{2}\circ(\kappa_{1}\circ\rho_{1}+\kappa_{2}\circ\rho_{2}) = \rho_{2}\circ\kappa_{1}\circ\rho_{1}+\rho_{2}\circ\kappa_{2}\circ\rho_{2} = 0_{x_{1},x_{2}}\circ\rho_{1}+\id_{x_{2}}\circ\rho_{2}= 0_{c,x_{2}}+\rho_{2}= \rho_{2}
\]
Threfeore, the uniqueness of $h$ implies the condition (c) $\kappa_{1}\circ\rho_{1}+_{c,c}\kappa_{2}\circ\rho_{2}=\id_{c}$.
The condition (d) is satisfied as follows. 
\[
\begin{array}{ll}
   & \kappa_{1}\circ\lambda_{1}\circ\rho_{1}+\kappa_{2}\circ\lambda_{2}\circ\rho_{2}\\
= & \kappa_{1}\circ\rho_{1}\circ f+\kappa_{2}\circ\rho_{2}\circ f\\
= & (\kappa_{1}\circ\rho_{1}+\kappa_{2}\circ\rho_{2})\circ f\\
= & \id_{c}\circ f\\
= & f\\
\end{array}
\]

(3. $\Longrightarrow$ 1.) 
Assume that arrows $\rho_{1}\colon\,c\to x_{1},\rho_{2}\colon\,c\to x_{2},\kappa_{1}\colon\,x_{1}\to c,\kappa_{2}\colon\,x_{2}\to c$ satisfy conditions (a)-(d).
For any $g_{1}\colon\,c'\to x_{1}$ and $g_{2}\colon\,c'\to x_{2}$, $\kappa_{1}\circ g_{1}+\kappa_{2}\circ g_{2}$ is the unique arrow $h\colon\,c'\to c$ satisfying $\rho_{1}\circ h=g_{1}$ and $\rho_{2}\circ h=g_{2}$, as follows.
\[
\rho_{1}\circ(\kappa_{1}\circ g_{1}+\kappa_{2}\circ g_{2}) = \rho_{1}\circ\kappa_{1}\circ g_{1}+\rho_{1}\circ\kappa_{2}\circ g_{2} = \id_{x_{1}}\circ g_{1}+0_{x_{2},x_{1}}\circ g_{2} = g_{1}+0_{c',x_{1}}=g_{1}
\]
\[
\rho_{2}\circ(\kappa_{1}\circ g_{1}+\kappa_{2}\circ g_{2}) = \rho_{2}\circ\kappa_{1}\circ g_{1}+\rho_{2}\circ\kappa_{2}\circ g_{2} = 0_{x_{1},x_{2}}\circ g_{1}+\id_{x_{2}}\circ g_{2} = 0_{c',x_{2}}+g_{2}=g_{2}
\]
\[
h=\id_{c}\circ h=(\kappa_{1}\circ\rho_{1}+\kappa_{2}\circ\rho_{2})\circ h = \kappa_{1}\circ\rho_{1}\circ h+\kappa_{2}\circ\rho_{2}\circ h = \kappa_{1}\circ g_{1}+\kappa_{2}\circ g_{2}
\]
Threfore, $c$ is a product of $x_{1},x_{2}$ with projections $\rho_{1},\rho_{2}$.
The equation $\rho_{1}\circ f=\lambda_{1}\circ \rho_{1}$ and $\rho_{2}\circ f=\lambda_{2}\circ \rho_{2}$ holds as follows.
\[
\rho_{1}\circ f=\rho_{1}\circ(\kappa_{1}\circ\lambda_{1}\circ\rho_{1}+\kappa_{2}\circ\lambda_{2}\circ\rho_{2})=\rho_{1}\circ\kappa_{1}\circ\lambda_{1}\circ\rho_{1}+\rho_{1}\circ\kappa_{2}\circ\lambda_{2}\circ\rho_{2}=\lambda_{1}\circ\rho_{1}
\]
\[
\rho_{2}\circ f=\rho_{2}\circ(\kappa_{1}\circ\lambda_{1}\circ\rho_{1}+\kappa_{2}\circ\lambda_{2}\circ\rho_{2})=\rho_{2}\circ\kappa_{1}\circ\lambda_{1}\circ\rho_{1}+\rho_{2}\circ\kappa_{2}\circ\lambda_{2}\circ\rho_{2}=\lambda_{2}\circ\rho_{2}
\]

(2. $\Longrightarrow$ 3.) 
Since $c$ is a coproduct of $x_{1},x_{2}$ whose injections $\kappa_{1}\colon\,x_{1}\to c,\kappa_{2}\colon\,x_{2}\to c$,
there exist the unique $\rho_{1}\colon\,c\to x_{1}$ and the unique $\rho_{2}\colon\,c\to x_{2}$ satisfying the conditions (a),(b).
Both $h=\id_{c}$ and $h=\kappa_{1}\circ\rho_{1}+_{c,c}\kappa_{2}\circ\rho_{2}$ satisfy
$h\circ\kappa_{1}=\kappa_{1}$ and $h\circ\kappa_{2}=\kappa_{2}$ as follows.
\[
(\kappa_{1}\circ\rho_{1}+\kappa_{2}\circ\rho_{2})\circ\kappa_{1} = \kappa_{1}\circ\rho_{1}\circ\kappa_{1}+\kappa_{2}\circ\rho_{2}\circ\kappa_{1} = \kappa_{1}\circ\id_{x_{1}}+\kappa_{2}\circ 0_{x_{1},x_{2}}= \kappa_{1}+0_{x_{1},c}= \kappa_{1}
\]
\[
(\kappa_{1}\circ\rho_{1}+\kappa_{2}\circ\rho_{2})\circ\kappa_{2} = \kappa_{1}\circ\rho_{1}\circ\kappa_{2}+\kappa_{2}\circ\rho_{2}\circ\kappa_{2} = \kappa_{1}\circ 0_{x_{2},x_{1}}+\kappa_{2}\circ\id_{x_{2}}= 0_{x_{2},c}+\kappa_{2}= \kappa_{2}
\]
Threfeore, the uniqueness of $h$ implies the condition (c) $\kappa_{1}\circ\rho_{1}+_{c,c}\kappa_{2}\circ\rho_{2}=\id_{c}$.
The condition (d) is satisfied as follows. 
\[
\begin{array}{ll}
   & \kappa_{1}\circ\lambda_{1}\circ\rho_{1}+\kappa_{2}\circ\lambda_{2}\circ\rho_{2}\\
= & f\circ\kappa_{1}\circ\rho_{1}+f\circ\kappa_{2}\circ\rho_{2}\\
= & f\circ(\kappa_{1}\circ\rho_{1}+\kappa_{2}\circ\rho_{2})\\
= & f\circ\id_{c}\\
= & f\\
\end{array}
\]

(3. $\Longrightarrow$ 2.) 
Assume that arrows $\rho_{1}\colon\,c\to x_{1},\rho_{2}\colon\,c\to x_{2},\kappa_{1}\colon\,x_{1}\to c,\kappa_{2}\colon\,x_{2}\to c$ satisfy conditions (a)-(d).
For any $g_{1}\colon\,x_{1}\to c'$ and $g_{2}\colon\,x_{2}\to c'$, $g_{1}\circ\rho_{1}+g_{2}\circ\rho_{2}$ is the unique arrow $h\colon\,c\to c'$ satisfying $h\circ\kappa_{1}=g_{1}$ and $h\circ\kappa_{2}=g_{2}$, as follows.
\[
(g_{1}\circ\rho_{1}+g_{2}\circ\rho_{2})\circ\kappa_{1}=g_{1}\circ\rho_{1}\circ\kappa_{1}+g_{2}\circ\rho_{2}\circ\kappa_{1}=g_{1}\circ\id_{x_{1}}+g_{2}\circ 0_{x_{1},x_{2}}=g_{1}+0_{x_{1},c'}=g_{1}
\]
\[
(g_{1}\circ\rho_{1}+g_{2}\circ\rho_{2})\circ\kappa_{2}=g_{1}\circ\rho_{1}\circ\kappa_{2}+g_{2}\circ\rho_{2}\circ\kappa_{2}=g_{1}\circ 0_{x_{2},x_{1}}+g_{2}\circ\id_{x_{2}}=0_{x_{2},c'}+g_{2}=g_{2}
\]
\[
h=h\circ\id_{c}=h\circ(\kappa_{1}\circ\rho_{1}+\kappa_{2}\circ\rho_{2})= h\circ\kappa_{1}\circ\rho_{1}+h\circ\kappa_{2}\circ\rho_{2} = g_{1}\circ\rho_{1}+g_{2}\circ\rho_{2}
\]
Threfore, $c$ is a coproduct of $x_{1},x_{2}$ with injections $\kappa_{1},\kappa_{2}$.
The equation $f\circ\kappa_{1}=\kappa_{1}\circ\lambda_{1}$ and $f\circ\kappa_{2}=\kappa_{2}\circ\lambda_{2}$ holds as follows.
\[
f\circ\kappa_{1}=(\kappa_{1}\circ\lambda_{1}\circ\rho_{1}+\kappa_{2}\circ\lambda_{2}\circ\rho_{2})\circ\kappa_{1}=\kappa_{1}\circ\lambda_{1}\circ\rho_{1}\circ\kappa_{1}+\kappa_{2}\circ\lambda_{2}\circ\rho_{2}\circ\kappa_{1}=\kappa_{1}\circ\lambda_{1}
\]
\[
f\circ\kappa_{2}=(\kappa_{1}\circ\lambda_{1}\circ\rho_{1}+\kappa_{2}\circ\lambda_{2}\circ\rho_{2})\circ\kappa_{2}=\kappa_{1}\circ\lambda_{1}\circ\rho_{1}\circ\kappa_{2}+\kappa_{2}\circ\lambda_{2}\circ\rho_{2}\circ\kappa_{2}=\kappa_{2}\circ\lambda_{2}
\]
\end{proof}

\begin{definition}\label{def:decom}
In a semiadditive $\cmon$-category $(C,+,0,\oplus,\pi,\iota)$, a {\bf spectral decomposition} of $f\colon\,c\to c$ on $x_{1},x_{2}$ is 
defined to be a pair of $\lambda_{1}\colon\,x_{1}\to x_{1}$ and $\lambda_{2}\colon\,x_{2}\to x_{2}$ 
satisfying one (i.e., all) of equivalent conditions of Theorem~\ref{thm:decom}.
Moreover, $\kappa_{1}$ and $\kappa_{2}$ are called {\bf eigeninjections} which belong to $\lambda_{1}$ and $\lambda_{2}$, respectively.
\end{definition}

\begin{lemma}\label{lem:decom}
Let $(\lambda_{1},\lambda_{2})$ be a spectral decomposition of $f\colon\,c\to c$ on $x_{1},x_{2}$ with eigeninjections $\kappa_{1},\kappa_{2}$ in a semiadditive 
$\cmon$-category $(C,+,0,\oplus,\pi,\iota)$.
Moreover, let $(\lambda'_{1},\lambda'_{2})$ be a spectral decomposition of $f'\colon\,c\to c$ on $x_{1},x_{2}$ with the same eigeninjections $\kappa_{1},\kappa_{2}$.

Then, 
\begin{enumerate}
 \item $(\lambda'_{1}\circ\lambda_{1},\lambda'_{2}\circ\lambda_{2})$ is a spectral decomposition of $f'\circ f\colon\,c\to c$ on $x_{1},x_{2}$ with the same eigeninjections $\kappa_{1},\kappa_{2}$, and
 \item $(\lambda'_{1}+\lambda_{1},\lambda'_{2}+\lambda_{2})$ is a spectral decomposition of $f'+f\colon\,c\to c$ on $x_{1},x_{2}$ with the same eigeninjections $\kappa_{1},\kappa_{2}$.
\end{enumerate}
\end{lemma}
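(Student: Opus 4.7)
The plan is to use characterization (3) of Theorem~\ref{thm:decom}, since it is the most arithmetic/algebraic formulation and passes nicely under composition and sums. The first step is to observe that once the eigeninjections $\kappa_{1},\kappa_{2}$ are fixed, the accompanying arrows $\rho_{1}\colon c\to x_{1}$ and $\rho_{2}\colon c\to x_{2}$ from condition (3) are forced to be the same for both decompositions. Indeed, both decompositions make $c$ a coproduct of $x_{1},x_{2}$ with injections $\kappa_{1},\kappa_{2}$ (by the equivalence of (2) and (3) in Theorem~\ref{thm:decom}), and conditions (3)(a),(b) say that $\rho_{1}$ (resp.\ $\rho_{2}$) is the unique cotuple $\row{\id_{x_{1}}}{0_{x_{2},x_{1}}}$ (resp.\ $\row{0_{x_{1},x_{2}}}{\id_{x_{2}}}$) for that coproduct. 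Hence I may fix common arrows $\rho_{1},\rho_{2}$ that witness condition (3) for both $f$ and $f'$.

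Next, conditions (a)--(c) of Theorem~\ref{thm:decom}(3) do not mention $\lambda_{1},\lambda_{2}$ or $f$, so they hold automatically for the candidate decompositions $(\lambda'_{1}\circ\lambda_{1},\lambda'_{2}\circ\lambda_{2})$ and $(\lambda'_{1}+\lambda_{1},\lambda'_{2}+\lambda_{2})$, using the very same $\kappa_{1},\kappa_{2},\rho_{1},\rho_{2}$. Therefore the only thing to verify in each case is condition (d).

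For part 1, I start from $f = \kappa_{1}\circ\lambda_{1}\circ\rho_{1}+\kappa_{2}\circ\lambda_{2}\circ\rho_{2}$ and $f' = \kappa_{1}\circ\lambda'_{1}\circ\rho_{1}+\kappa_{2}\circ\lambda'_{2}\circ\rho_{2}$, compute $f'\circ f$ by bilinear expansion (using the distributivity of composition over $+$ in a $\cmon$-category), and then apply the orthogonality relations $\rho_{i}\circ\kappa_{j} = \delta_{ij}\id$ (from (a),(b)) to kill the two cross terms. What remains is exactly $\kappa_{1}\circ(\lambda'_{1}\circ\lambda_{1})\circ\rho_{1}+\kappa_{2}\circ(\lambda'_{2}\circ\lambda_{2})\circ\rho_{2}$, which is condition (d) for the pair $(\lambda'_{1}\circ\lambda_{1},\lambda'_{2}\circ\lambda_{2})$ as a decomposition of $f'\circ f$.

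For part 2, I add the two expressions for $f$ and $f'$, regroup using commutativity of $+_{c,c}$, and then factor using the distributive law $\kappa_{i}\circ\lambda'_{i}\circ\rho_{i}+\kappa_{i}\circ\lambda_{i}\circ\rho_{i}=\kappa_{i}\circ(\lambda'_{i}+\lambda_{i})\circ\rho_{i}$. This immediately gives condition (d) for $(\lambda'_{1}+\lambda_{1},\lambda'_{2}+\lambda_{2})$ as a decomposition of $f'+f$. I do not anticipate any real obstacle here: the only subtle point is the uniqueness/coincidence argument for $\rho_{1},\rho_{2}$ in the opening step; once that is in place, both parts are brief algebraic manipulations in the $\cmon$-enrichment.
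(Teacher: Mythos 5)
Your proof is correct, but it routes through a different clause of Theorem~\ref{thm:decom} than the paper does. The paper verifies condition~(2): since the coproduct structure $(c,\kappa_{1},\kappa_{2})$ is shared by both decompositions and is independent of $f$, all that remains is the intertwining identity, which follows in one line from associativity and distributivity, e.g.\ $(f'\circ f)\circ\kappa_{1}=f'\circ\kappa_{1}\circ\lambda_{1}=\kappa_{1}\circ(\lambda'_{1}\circ\lambda_{1})$ and $(f'+f)\circ\kappa_{1}=f'\circ\kappa_{1}+f\circ\kappa_{1}=\kappa_{1}\circ(\lambda'_{1}+\lambda_{1})$. You instead verify condition~(3)(d), which obliges you to first argue that the retractions $\rho_{1},\rho_{2}$ are forced by the shared $\kappa$'s --- your uniqueness argument via the coproduct universal property and the cotuples $\row{\id_{x_{1}}}{0_{x_{2},x_{1}}}$, $\row{0_{x_{1},x_{2}}}{\id_{x_{2}}}$ is valid and is in fact a point the paper leaves implicit --- and then to do a four-term bilinear expansion in which the cross terms die by orthogonality. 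Both arguments are sound; the paper's is shorter because condition~(2) never mentions the $\rho$'s, while yours makes explicit the ``matrix algebra'' picture ($\rho_{i}\circ\kappa_{j}=\delta_{ij}$ killing off-diagonal terms) that underlies the lemma and generalizes more transparently to decompositions into more than two summands.
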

\begin{proof}
$\lambda_{1},\lambda'_{1}$, and $\kappa_{1}$ satisfy
$(f'\circ f)\circ\kappa_{1}=f'\circ\kappa_{1}\circ\lambda_{1}=\kappa_{1}\circ(\lambda'_{1}\circ\lambda_{1})$ and $(f'+f)\circ\kappa_{1}=f'\circ\kappa_{1}+f\circ\kappa_{1}=\kappa_{1}\circ\lambda'_{1}+\kappa_{1}\circ\lambda_{1}=\kappa_{1}\circ(\lambda'_{1}+\lambda_{1})$. 
$\lambda_{2},\lambda'_{2}$, and $\kappa_{2}$ also satisfy the similar equations.
\end{proof}

\begin{example}
In $\mat{\re}$ as semiadditive  category defined by Example 1, for each arrows 
$
h_{1},h_{2} : m \to n
$
, the $n \times m$ real matrices, a binary operator $h_{1} +_{m,n} h_{2}$ is defined by the matrix sum $ h_{1} + h_{2} $. 
{
Because 
\begin{align*}
h_{1} +_{m,n} h_{2}
&=
\nabla_{n}\circ(h_{1}\oplus h_{2})\circ\Delta_{m}
\\
&=
\row{\id_{n}}{\id_{n}}
\circ
 \left[
  \begin{array}{cc}
    h_{1} & 0_{m,n}\\ 
    0_{m,n} & h_{2} \\ 
  \end{array}
  \right]
\circ
\col{\id_{m}}{\id_{m}}
\\
&=
\rowmat{\id_{n}f_{n}}{\id_{n}g_{n}}
 \left(
  \begin{array}{cc}
    f^{-1}_{n}h_{1}f_{m} & f^{-1}_{n}0_{m,n}g_{m}\\ 
    g^{-1}_{n}0_{m,n}f_{m} & g^{-1}_{n}h_{2}g_{m} \\ 
  \end{array}
  \right)
\colmat{f^{-1}_{m}\id_{m}}{g^{-1}_{m}\id_{m}}
\\
&=
\rowmat{f_{n}}{g_{n}}\colmat{f^{-1}_{n}h_{1}}{g^{-1}_{n}h_{2}}
\\
&=
h_{1} + h_{2}, 
\end{align*}
where $f_{m}, g_{m}$ are $m$-dimensional regular matrices and $f_{n}, g_{n}$ are $n$-dimensional regular matrices defining projections and injections. 
}
Thus $\mat{\re}$ is a semiadditive $\cmon$-category. 

Let $f$ be an $n$-dimensional real symmetric matrix with real eigenvalues $\lambda_{1}, \lambda_{2}, \ldots ,\lambda_{n}$ and the corresponding eigenvectors $|v_{1}\rangle , |v_{2}\rangle ,\ldots |v_{n}\rangle $ with $f|v_{i}\rangle = \lambda_{i}|v_{i}\rangle $ for each $i=1,2,\ldots n$. Because $f$ is real symmetric matrix then $\{|v_{1}\rangle , |v_{2}\rangle ,\ldots |v_{n}\rangle \}$ can be an orthonormal basis of the $n$-dimensional real vector space $\re^{n}$. 
Therefore for each vector $|v\rangle \in \re^{n}$ we have a unique representation 
$
|v\rangle = \sum_{i=1}^{n}a_{i}|v_{i}\rangle .
$
In this setting, the action of the matrix $f$ is represented by 
$
f|v\rangle = \sum_{i=1}^{n}\lambda_{i}a_{i}|v_{i}\rangle .
$
Let 
$
\lambda_{i} : 1\to 1
$
be a multiplication operator 
$
\lambda_{i}(a_{i}|v_{i}\rangle )= \lambda_{i}a_{i}|v_{i}\rangle 
$ 
and 
$
\rho_{i} = |v_{i}\rangle \langle v_{i}|
$
where $\langle v_{i}|$ is the conjugate transpose of $|v_{i}\rangle $. Then we have
$
(\rho_{i}\circ f)|v\rangle 
=
\lambda_{i}a_{i}|v_{i}\rangle 
=
(\lambda_{i}\circ \rho_{i})|v\rangle 
$
for each $i=1,2,\ldots ,n$. Therefore the multiplication operators are satisfied with 1 of Theorem 3. 

Similarly, we can consider diagonalizable matrices cases. Let $f$ be an $n$-dimensional diagonalizable matrix with real eigenvalues $\lambda_{1}, \lambda_{2}, \ldots ,\lambda_{n}$ and the corresponding eigenvectors $|v_{1}\rangle , |v_{2}\rangle ,\ldots |v_{n}\rangle $ with $f|v_{i}\rangle = \lambda_{i}|v_{i}\rangle $ for each $i=1,2,\ldots n$. If we define a matrix $P = [ |v_{1}\rangle , |v_{2}\rangle ,\ldots |v_{n}\rangle ]$ then we can consider the inverse matrix 
$
P^{-1} 
=
\left[
\begin{array}{c}
\langle w_{1}| \\
\langle w_{2}| \\
\vdots \\
\langle w_{n}|\\
\end{array}
\right]
$ 
because $f$ is diagonalizable. 
Note that if we consider a projection 
$
\rho_{i} = |v_{i}\rangle \langle w_{i}|
$
then the corresponding injection should be 
$
\kappa_{i} = |v_{i}\rangle \langle w_{i}|
$
because $PP^{-1}=I_{n}$ induces 
$
\left(|v_{i}\rangle \langle w_{i}|\right)^{2} =|v_{i}\rangle \langle w_{i}|. 
$
Thus we have
$
(\lambda_{i}|v_{i}\rangle  \langle w_{i}|) |v\rangle 
=
\lambda_{i}a_{i}|v_{i}\rangle 
=
(\lambda_{i}\circ \rho_{i})|v\rangle 
=
(\kappa _{i} \circ \lambda_{i}\circ \rho_{i})|v\rangle 
$
for each $i=1,2,\ldots ,n$. This implies 
$
f
=
\kappa_{1}\circ\lambda_{1}\circ\rho_{1}+_{c,c}\kappa_{2}\circ\lambda_{2}\circ\rho_{2}+_{c,c}\cdots+_{c,c}\kappa_{n}\circ\lambda_{n}\circ\rho_{n}
=
\sum_{i=1}^{n}
\lambda_{i}|v_{i}\rangle  \langle w_{i}|,
$
the diagonalization of $f$. 

Lemma \ref{lem:decom} is corresponding to simultaneous diagonalization. Let $f, f^{\prime}$ be $n$-dimensional matrices with 
$
f
=
\sum_{i=1}^{n}
\lambda_{i}|v_{i}\rangle  \langle w_{i}|
$
and 
$
f^{\prime }
=
\sum_{i=1}^{n}
\lambda^{\prime }_{i}|v_{i}\rangle  \langle w_{i}|. 
$
Then we have 
$
f^{\prime }f
=
\sum_{i=1}^{n}
\lambda^{\prime }_{i}\lambda_{i}|v_{i}\rangle  \langle w_{i}|
$
and 
$
f^{\prime } + f
=
\sum_{i=1}^{n}
\left(\lambda^{\prime }_{i} + \lambda_{i}\right)
|v_{i}\rangle  \langle w_{i}|
$ 
by Lemma \ref{lem:decom}. These properties enable us constructing the spectral decomposition of matrix polynomials $\sum_{k=0}^{N}c_{k}f^{k}$ with $c_{0},\ldots ,c_{N}\in \re$ as 
\begin{align*}
\sum_{k=0}^{N}c_{k}f^{k}
=
\sum_{i=1}^{n}
\left(
\sum_{k=0}^{N}c_{k}\lambda _{i}^{k}
\right)
|v_{i}\rangle  \langle w_{i}|.
\end{align*}
The spectral decomposition of $\exp{f}=\lim_{n\to \infty}\Sigma_{k=0}^{n}\frac{1}{k!}f^{k}$ in $\mat{\re}$
is one of applications of Lemma~\ref{lem:decom}.
$\mat{\re}$ has scalar multiplication as arrows on any object.
When $(\lambda_{1},\lambda_{2})$ is a spectral decomposition of $f\colon\,c\to c$ on $x_{1},x_{2}$ with eigeninjections $\kappa_{1},\kappa_{2}$ in $\mat{\re}$,
the pair of scalar multiplications $\frac{1}{k!}\colon\,x_{1}\to x_{1}$ and $\frac{1}{k!}\colon\,x_{2}\to x_{2}$ is a
spectral decomposition of $\frac{1}{k!}\colon\,c\to c$.
By using Lemma~\ref{lem:decom} repeatedly, the pair of $\Sigma_{k=0}^{n}\frac{1}{k!}\lambda_{1}^{k}$ and 
 $\Sigma_{k=0}^{n}\frac{1}{k!}\lambda_{2}^{k}$ is a spectral decomposition of
 $\Sigma_{k=0}^{n}\frac{1}{k!}f^{k}$ for each non-negative integer $n$.
Since $\lim_{n\to \infty}$ is preserved by composition of arrows in $\mat{\re}$, 
$\exp{\lambda_{1}}$ and $\exp{\lambda_{2}}$ is a spectral decomposition of
$\exp{f}$.
However, limits are not preserved in any semiadditive $\cmon$-category.
General proof about preservation of limits needs
orders in semiadditive $\cmon$-categories
and needs to regard $\lim_{n\to \infty}g_{n}$ as the least upper bound
or the greatest lower bound of the arrow sequence $g_{0},g_{1},\ldots$ in the orders. 
\end{example}

\begin{example}
In $\mat{\posre}$ as a semiadditive $\cmon$-category defined by the same way as $\mat{\re}$, let $f$ be an $n$-dimensional block diagonal non-negative matrix 
$
f
=
\textrm{diag} (\lambda_{1}, \lambda_{2}, \ldots ,\lambda_{m})
$
where $\lambda_{i}$ be an $n_{i}$-dimensional non-negative matrix for each $i=1, 2, \ldots , m$. 
In this case, $\pi_{i} = \iota_{i}$ is nothing but the projection matrix onto the $n_{i}$-dimensional subspace of which $\lambda_{i}$ is acting for each $i=1,2,\ldots ,m$.  We have 
$
\iota _{i} \circ \lambda_{i}\circ \pi_{i}
=
\textrm{diag} (0_{n_{1},n_{1}}, \ldots ,0_{n_{i-1},n_{i-1}}, \lambda_{i}, 0_{n_{i+1},n_{i+1}}, \ldots ,0_{n_{m},n_{m}}).
$
Thus 
$
f
=
\iota_{1}\circ\lambda_{1}\circ\pi_{1}+_{c,c}\iota_{2}\circ\lambda_{2}\circ\pi_{2}+_{c,c}\cdots+_{c,c}\iota_{n}\circ\lambda_{m}\circ\pi_{m}
=
\textrm{diag} (\lambda_{1}, \lambda_{2}, \ldots ,\lambda_{m}).
$
\end{example}

\begin{example}
In $\mat{\complex}$ as a semiadditive $\cmon$-category defined by the same way as $\mat{\re}$, let $f$ be an $n$-dimensional complex matrix, there exists a regular matrix $P$ then we have so-called Jordan normal form
$
P^{-1}fP
=
\textrm{diag} (\lambda_{1}, \lambda_{2}, \ldots ,\lambda_{m}),
$
where $\lambda_{i}$ is known as the Jordan block, an $n_{i}$-dimensional matrix corresponding to the $i$-th eigenvalue. 
As same as Example 7, $\pi_{i} = \iota_{i}$ is nothing but the projection matrix onto the $n_{i}$-dimensional subspace of which $\lambda_{i}$ is acting for each $i=1,2,\ldots ,m$.  We have 
$
\iota _{i} \circ \lambda_{i}\circ \pi_{i}
=
\textrm{diag} (0_{n_{1},n_{1}}, \ldots ,0_{n_{i-1},n_{i-1}}, \lambda_{i}, 0_{n_{i+1},n_{i+1}}, \ldots ,0_{n_{m},n_{m}}).
$
Thus 
$
f
=
\iota_{1}\circ\lambda_{1}\circ\pi_{1}+_{c,c}\iota_{2}\circ\lambda_{2}\circ\pi_{2}+_{c,c}\cdots+_{c,c}\iota_{n}\circ\lambda_{m}\circ\pi_{m}
=
\textrm{diag} (\lambda_{1}, \lambda_{2}, \ldots ,\lambda_{m}).
$
 
\end{example}

\begin{example}
In $\rel$ as semiadditive  category defined by Example \ref{example-rel}, for each arrows 
$
h_1,\,h_2 : A \to B
$
, the binary relations, 
a binary operator $h_1 +_{A,B} h_2$ is defined by the union $h_1 \cup h_2$. 
Because 
$$
\begin{array}{rcl}
&&h_1 +_{A,B} h_2\\[5pt]
&=&\nabla_B \circ (h_1 \oplus h_2) \circ \Delta_A\\[5pt]
&=&\row{\id_B}{\id_B} \circ 
 \left[
  \begin{array}{cc}
    h_1 & 0_{A,B} \\ 
    0_{A,B} & h_2\\ 
  \end{array}
  \right]
\circ
\col{\id_A}{\id_A}\\[15pt]
&=&
(\pi_1^{B,B}\cup \pi_2^{B,B})\circ
(\iota_1^{B,B}\circ h_1\circ \pi_1^{A,A}\,\cup \, \iota_2^{B,B}\circ h_2\circ \pi_2^{A,A}) \circ
(\iota_1^{A,A}\cup \iota_2^{A,A})\\[10pt]
&=&
(\pi_1^{B,B}\circ \iota_1^{B,B}\circ h_1\circ \pi_1^{A,A}\,\cup \, 
\pi_2^{B,B}\circ \iota_2^{B,B}\circ h_2\circ \pi_2^{A,A}) \circ
(\iota_1^{A,A}\cup \iota_2^{A,A})\\[10pt]
&=&
(h_1\circ \pi_1^{A,A}\,\cup \, 
h_2\circ \pi_2^{A,A}) \circ
(\iota_1^{A,A}\cup \iota_2^{A,A})\\[10pt]
&=&
h_1\circ \pi_1^{A,A}\circ\iota_1^{A,A}
\,\cup \, 
h_2\circ \pi_2^{A,A}\circ\iota_2^{A,A}\\[10pt]
&=&
h_1\cup h_2
\end{array}
$$
Thus $\rel$ is a semiadditive $\cmon$-category. 

Let $C$ be the disjoint union of $A_1$ and $A_2$, and $f:C\to C$ be a homogeneous relation satisfying   
$f\subseteq (A_1\times A_1) \cup (A_2\times A_2)$. 
For $i\in \{1, 2\}$, we define the arrows 
$\rho_i:C\to A_i$, 
$\kappa_i:A_i\to C$, 
$\lambda_i:A_i\to A_i$, 
as follows:
$$
\begin{array}{l}
\rho_i=\{(a, a)\mid a\in A_i\}, \quad
\kappa_i=\{(a, a) \mid a\in A_i\}, \quad
\lambda_i=f|_{A_i}.
\end{array}
$$
where $\lambda_1$ and $\lambda_2$ are the restrictions of $f$ for each set respectively. 
The arrows $\rho_1$ and $\kappa_1$ are practically $\id_{A_1}$, and also $\rho_2$ and $\kappa_2$ are $\id_{A_2}$.

Then it holds that 
$\rho_i \circ f=\lambda_i\circ \rho_i$, 
${f \circ \kappa_i=\kappa_i\circ \lambda_i}$, 
and $\rho_i\circ \kappa_i=\id_{A_i}$ 
for each $i\in \{1, 2\}$. 
And also the followings hold: 
$\rho_2\circ \kappa_1=0_{A_1,A_2}$, 
$\rho_1\circ \kappa_2=0_{A_2,A_1}$, 
${\kappa_1\circ\rho_1 +\kappa_2\circ\rho_2=\id_C}$, 
$\kappa_1\circ\lambda_1 \circ\rho_1 +_{C,C} \kappa_2\circ\lambda_2 \circ\rho_2=f$. 
Thus it is shown that $(\lambda_1,\lambda_2)$ is a spectral decomposition of ${f:C\to C}$ on $A_1,\,A_2$. 

Let $(\lambda_1',\lambda_2')$ be a spectral decomposition of ${f':C\to C}$ on $A_1,\,A_2$, constructed as similar as $f$, that is, defined by $\lambda_1'=f'|_{A_1}$, $\lambda_2'=f'|_{A_2}$, $\rho_1,\,\rho_2,\,\kappa_1$, and $\kappa_2$.
Then $(\lambda'_{1}\circ\lambda_{1},\lambda'_{2}\circ\lambda_{2})$ is a spectral decomposition of $f'\circ f\colon\,C\to C$ on $A_1,A_2$ with the same eigeninjections $\kappa_{1},\kappa_{2}$,  
because for each $i\in \{1, 2\}$
$$\lambda'_{i}\circ\lambda_{i}=f'|_{A_i}\circ f|_{A_i}=(f'\circ f)|_{A_i}\,.$$
Moreover, $(\lambda'_{1}+\lambda_{1},\lambda'_{2}+\lambda_{2})$ is a spectral decomposition of $f'+f\colon\,C\to C$ on $A_1,A_2$ with the same eigeninjections $\kappa_{1},\kappa_{2}$, 
because 
$$\lambda'_{i}+\lambda_{i}=f'|_{A_i}\cup f|_{A_i}=(f'\cup f)|_{A_i}=(f'+ f)|_{A_i}$$
for each $i\in \{1, 2\}$.

\end{example}

\begin{example}\label{exam:4-spectral}
In $\rell{L}$ as semiadditive  category defined by Example \ref{example-Lrel}, for each arrows 
$h_1,h_2 : A \to B$, the $L$-relations, 
a binary operator $h_1 +_{A,B} h_2$ is defined by 
the join $h_1\sqcup h_2$ of the $L$-relations 
because of the similar calculation of $\rel$.
Thus $\rell{L}$ is a semiadditive $\cmon$-category. 

For the disjoint union $C$ of $A_1$ and $A_2$, let $f:C\to C$ be a homogeneous $L$-relation satisfying that $i\neq j$ implies $f(a,a')=\bot$ for each $a\in A_i$ and $a\in A_j$. 
For $i\in \{1, 2\}$, we define the arrows 
$\rho_i:C\to A_i$, 
$\kappa_i:A_i\to C$, 
$\lambda_i:A_i\to A_i$, 
by $\rho_i=\{(a, a)\mid a\in A_i\}$, 
$ \kappa_i=\{(a, a) \mid a\in A_i\}$, 
$\lambda_i=f|_{A_i}$. 
In this setting, $(\lambda_1,\lambda_2)$ is a spectral decomposition of ${f:C\to C}$ on $A_1,\,A_2$. This construction is obviously a generalization of previous example on $\rel$. 

The next example is a characteristic one on $\rell{L}$, which does not appear on $\rel$.
Consider the category $\rell{B_4}$ using the 4-elements Boolean algebra $B_4$. 
For a set $C=\{c_1, c_2\}$, let $f:C\to C$ satisfy $f(c_1, c_1)=a$, $f(c_2,c_2)=b$, $f(c_1,c_2)=f(c_2,c_1)=0$. 
For $A_1=\{\ast_1\}$ and $A_2=\{\ast_2\}$, we define the arrows 
$\rho_i:C\to A_i$, 
$\kappa_i:A_i\to C$, 
$\lambda_i:A_i\to A_i$, 
as follows:
$$
\begin{array}{l}
\rho_1(c_1,\ast_1)=\kappa_1(\ast_1,c_1)=a, \quad
\rho_2(c_1,\ast_2)=\kappa_2(\ast_2,c_1)=b,\quad\\
\rho_1(c_2,\ast_1)=\kappa_1(\ast_1,c_2)=b, \quad
\rho_2(c_2,\ast_2)=\kappa_2(\ast_2,c_2)=a,\quad\\
\lambda_1(\ast_1,\ast_1)=1, \quad\lambda_2(\ast_2,\ast_2)=0.
\end{array}
$$

These can be expressed in matrix form as follows:
$$f=
\begin{bmatrix}
a & 0\\
0 & b    
\end{bmatrix}
\quad
\kappa_1=
\begin{bmatrix}
a \\
b     
\end{bmatrix}
\quad
\kappa_2=
\begin{bmatrix}
b \\
a    
\end{bmatrix}
\quad
\rho_1=
\begin{bmatrix}
a & b    
\end{bmatrix}
\quad
\rho_2=
\begin{bmatrix}
b & a     
\end{bmatrix}
\quad
$$
Then it holds that $(\lambda_1,\lambda_2)$ is a spectral decomposition of ${f:C\to C}$ on $A_1,\,A_2$. 

For another example, we consider a set $C=\{c_1, c_2, c_3\}$, $A_1=\{x_1,x_2\}$, and $A_2=\{y_1\}$. A $L$-relation $f_1:C\to C$ satisfies $f_1(c_1, c_2)=a$ and  $f_1(c_2,c_3)=b$; otherwise $f_1(c,c')=0$,  
and the arrows 
$\rho_i:C\to A_i$, 
$\kappa_i:A_i\to C$, 
$\lambda_i:A_i\to A_i$
are defined as follows:
$$
\begin{array}{l}
f_1=
\begin{bmatrix}
0 & a & 0\\
0& 0 & b\\
0& 0 & 0\\
\end{bmatrix}
\quad
\rho_1=
\begin{bmatrix}
a & b & 0\\
0 & a & b\\    
\end{bmatrix}
\quad
\rho_2=
\begin{bmatrix}
b & 0 & a 
\end{bmatrix}
\quad
\kappa_1=
\begin{bmatrix}
a & 0\\
b & a\\
0 & b
\end{bmatrix}
\quad
\kappa_2=
\begin{bmatrix}
b \\
0 \\
a     
\end{bmatrix}
\quad
\\[20pt]
\lambda_1=
\begin{bmatrix}
0 & 1 \\
0 & 0
\end{bmatrix}
\quad
\lambda_2= 0
\end{array}
$$
Then it holds that $(\lambda_1,\lambda_2)$ is a spectral decomposition of ${f_1:C\to C}$ on $A_1,\,A_2$. 
The following diagram illustrates how $f_1$ is decomposed by $\lambda_1$ and $\lambda_2$ using a weighted directed graph.

Using the same arrows $\rho_i:C\to A_i$, $\kappa_i:A_i\to C$, \vspace{5pt}
we show another spectral decomposition $(\lambda'_1,\lambda'_2)$ of $L$-relation $f_2:C\to C$ defined by 
$
f_2=
\begin{bmatrix}
1& 0 & 0\\
0 & a & 0\\
0& 0 & b\\
\end{bmatrix}
$
, 
$
\lambda'_1=
\begin{bmatrix}
a & 0 \\
0 & 1
\end{bmatrix}
$, 
and $\lambda'_2= \begin{matrix} b \end{matrix}$.
The diagram in this setting is as follows:
\begin{figure}[h]
 \begin{center}
  \includegraphics[bb=0 0 473 362,width=0.4\columnwidth]{./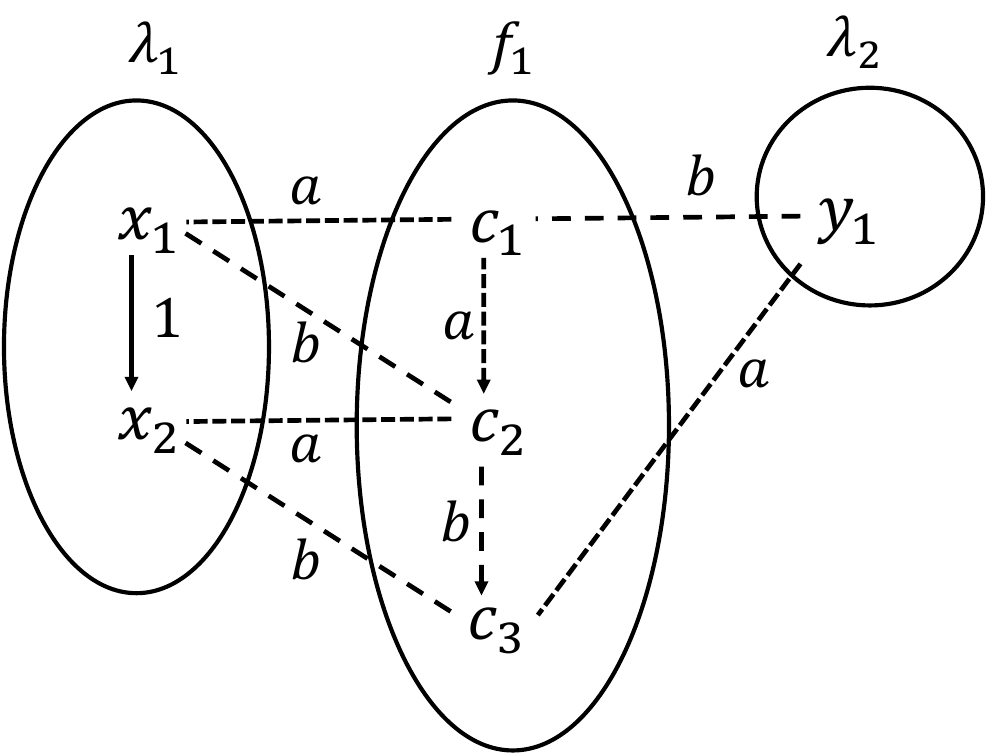}
  \includegraphics[bb=0 0 473 366,width=0.4\columnwidth]{./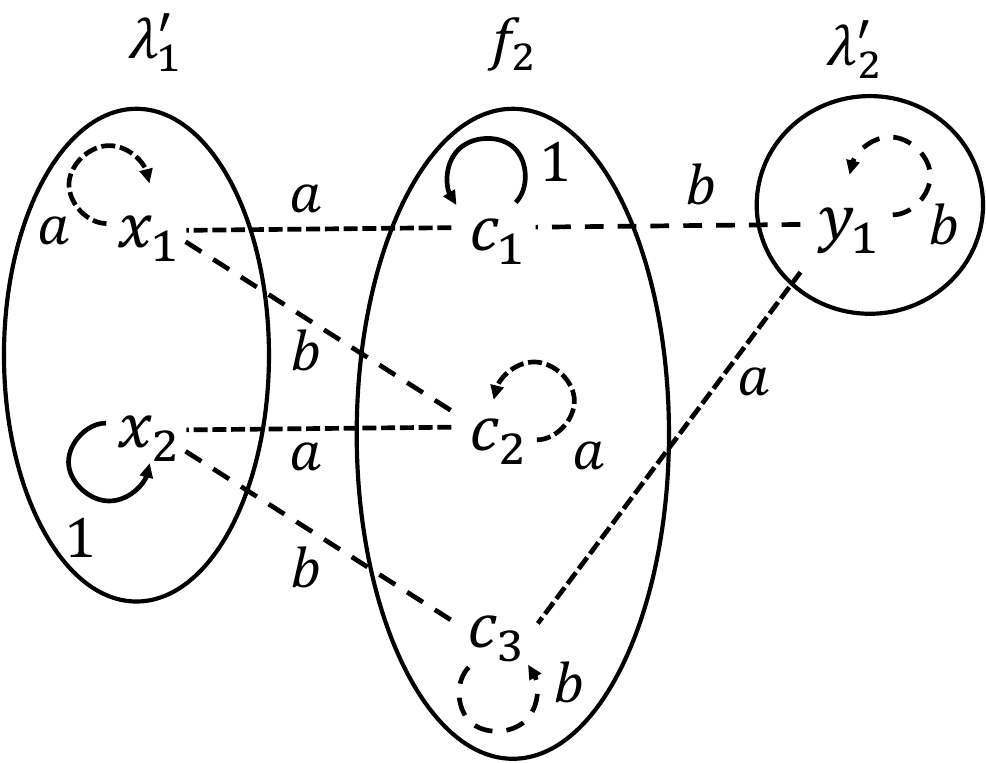}
 \end{center}
\end{figure}

By Lemma \ref{lem:decom}, $(\lambda'_1+\lambda_1, \lambda'_2+\lambda_2)$ is a spectral decomposition of $f_2+f_1:C\to C$ on $A_1$, $A_2$ as the following diagram. 
\begin{figure}[h]
 \begin{center}
  \includegraphics[bb=0 0 473 366,width=0.4\columnwidth]{./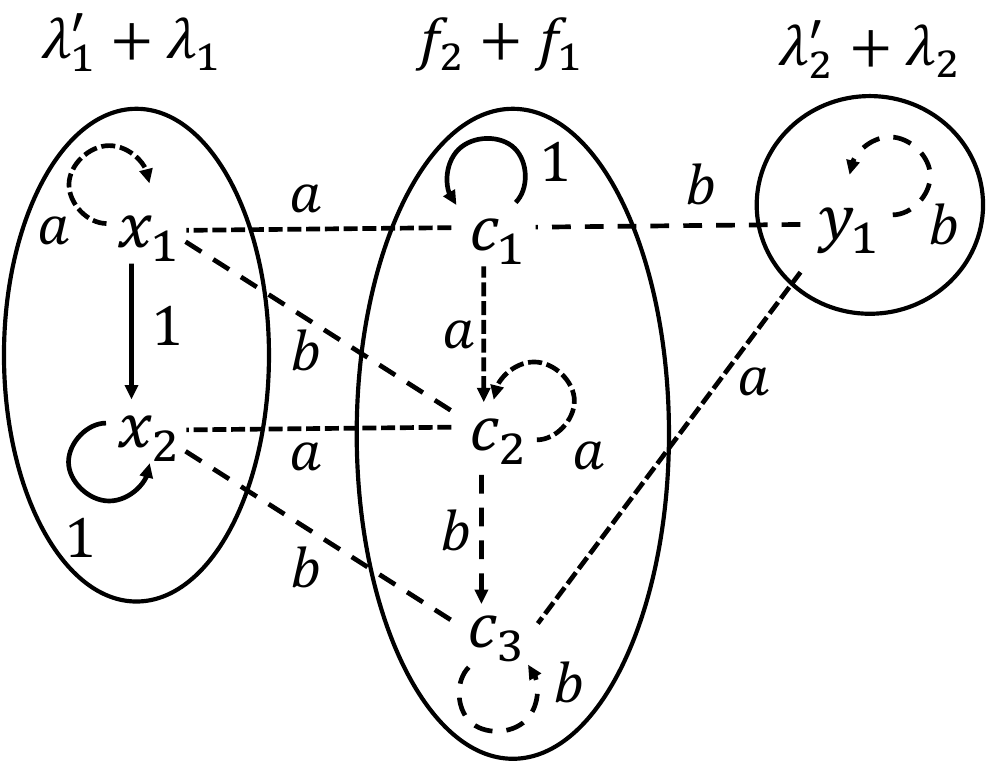}
 \end{center}
\end{figure}
\end{example}

\begin{example}
Let $G=(V(G),E(G))$ be an undirected connected graph with its vertex set $V(G)=\{1,2,\ldots ,n\}$ and edge set $E(G)$. We consider an $n$-dimensional real matrix $f$ with $\left( f \right)_{j,k}=\frac{1}{d_{j}}\  (j,k\in V(G))$ where $d_{j}=\sharp\{k\in V(G)\ |\ (j,k)\in E(G)\}$, the number of edges connected to the vertex $j$. Note that $d_{j}\geq 1$ for all $j\in V(G)$ because $G$ is connected. The matrix $f$ is nothing but the transition matrix of the simple random walk on $G$. Now we consider the following equitable partition  \cite{GodsilAGT} of $G$: 

A set of subgraphs $\left( G_{\bar{1}}, G_{\bar{2}}, \ldots , G_{\bar{N}} \right)$ is an equitable partition of the graph $G$ if it is satisfied with the following two conditions:
\begin{enumerate}
    \item
    $V(G) = \displaystyle\bigcup _{j=1}^{N}V(G_{\bar{j}}),\ V(G_{\bar{j}})\cap V(G_{\bar{k}}) = \emptyset \ (\bar{j}\neq \bar{k})$.
    \item 
    There exists a set of constants $\{d_{\bar{j},\bar{k}}\ |\ \bar{1}\leq \bar{j},\bar{k}\leq \bar{N}\}$ such that for all $\bar{1}\leq \bar{j},\bar{k}\leq \bar{N}$, $\sharp\{k\in V(G_{\bar{k}})\ |\ (j,k)\in E(G)\} = d_{\bar{j}, \bar{k}}$ for any $j\in V(G_{\bar{j}})$. 
\end{enumerate}
For an equitable partition $\left( G_{\bar{1}}, G_{\bar{2}}, \ldots , G_{\bar{N}} \right)$ of $G$, we put a projection
\begin{align*}
\rho_{1}
=
\begin{bmatrix}
    \bra{\bar{1}}\\
    \bra{\bar{2}}\\
    \vdots\\
    \bra{\bar{N}}
\end{bmatrix}
,\quad 
\bra{\bar{j}}
=
\frac{1}{n_{\bar{j}}}
\sum_{j\in V(G_{\bar{j}})}
\bra{j}
\ \ \text{with}\ \ 
n_{\bar{j}}=\sharp V(G_{\bar{j}})
\ \text{for}\ \ \bar{1}\leq \bar{j}\leq \bar{N}, 
\end{align*}
where $\{\bra{j}\}_{1\leq j\leq n}$ is the standard basis of $\mathbf{R}^{n}$ as row vectors. In this case, the corresponding injection $\kappa _{1}$ is given by 
\begin{align*}
\kappa_{1}
=
\begin{bmatrix}
    \widetilde{\ket{\bar{1}}}\ 
    \widetilde{\ket{\bar{2}}}\ 
    \ldots\
    \widetilde{\ket{\bar{N}}}
\end{bmatrix}
,\quad 
\widetilde{\ket{\bar{j}}}
=
\sum_{j\in V(G_{\bar{j}})}
\ket{j}
\ \text{for}\ \ \bar{1}\leq \bar{j}\leq \bar{N}, 
\end{align*}
where $\{\ket{j}\}_{1\leq j\leq n}$ is the standard basis of $\mathbf{R}^{n}$ as column vectors. Note that the action of the transition matrix $f$ to the vector $\bra{\bar{j}}$ is given by 
\begin{align*}
\bra{\bar{j}}f
=
\sum_{\bar{k}=\bar{1}}^{\bar{N}}
\left(
\frac{1}{n_{\bar{j}}}
\times 
\frac{d_{\bar{k},\bar{j}}}{d_{\bar{j}}}
\times 
n_{\bar{k}}
\right)
\bra{\bar{k}}
=
\sum_{\bar{k}=\bar{1}}^{\bar{N}}
\frac{d_{\bar{j},\bar{k}}}{d_{\bar{j}}}
\bra{\bar{k}}, 
\end{align*}
where 
$
d_{\bar{j}}
=
\sum_{\bar{k}=\bar{1}}^{\bar{N}}
d_{\bar{j},\bar{k}}.
$ 
Here we use the conservation law for the number of edges between two subgraphs $G_{\bar{j}}$ and $G_{\bar{k}}$, that is,  $n_{\bar{j}}d_{\bar{j},\bar{k}}=n_{\bar{k}}d_{\bar{k},\bar{j}}$. Thus we have the reduced $N\times N$ transition matrix $\lambda _{1}$ as 
\begin{align*}
\left(\lambda _{1}\right)_{\bar{j},\bar{k}}=\frac{d_{\bar{j},\bar{k}}}{d_{\bar{j}}}. 
\end{align*}

By Theorem \ref{thm:decom}, the transition matrix $f$ in $\mat{\re}$
is decomposed into two parts as
\begin{align*}
f
=
\kappa_{1}\lambda _{1}\rho_{1} + \left( f - \kappa_{1}\lambda _{1}\rho_{1}  \right)
=
\kappa_{1}\lambda _{1}\rho_{1} 
+
\kappa_{2}\lambda _{2}\rho_{2},
\end{align*}
if we define 
$
\kappa_{2}\lambda _{2}\rho_{2}
=
f - \kappa_{1}\lambda _{1}\rho_{1}. 
$
This shows that the first matrix $\kappa_{1}f\rho_{1}$ indicates the fundamental movement of the walkers, i.e., the walker walks on equitable partition $\left( G_{\bar{1}}, G_{\bar{2}}, \ldots , G_{\bar{N}} \right)$ according to the reduced transition matrix $\lambda _{1}$. The fluctuation of the total distribution of the walker is determined by the second matrix $\kappa_{2}\lambda _{2}\rho_{2}=f - \kappa_{1}\lambda _{1}\rho_{1}$. 
\end{example}

\begin{example}
The spectral decomposition in $\mat{\re}$ includes 
separation of connected directed graphs.

For example, $3$ is the biproduct of $2$ and $1$,
with the following projections $\rho_{1}:3\to 2$, $\rho_{2}:3\to 1$
and the following injections $\kappa_{1}:2\to 3$, and $\kappa_{2}:1\to 3$.
\[
\rho_{1}=
\begin{bmatrix}
1 & 0 & 0\\
0 & 1 & 1\\
\end{bmatrix},
\rho_{2}=
\begin{bmatrix}
0 & 0 & 1\\
\end{bmatrix},
\kappa_{1}=
\begin{bmatrix}
1 & 0\\
0 & 1\\
0 & 0\\
\end{bmatrix},
\kappa_{2}=
\begin{bmatrix}
0\\
-1\\
1\\
\end{bmatrix}
\]

The following arrow $f:3\to 3$ has a spectral decomposition $\lambda_{1}:2\to 2$, $\lambda_{2}:1\to 1$.
\[
f=
\begin{bmatrix}
0 & 1 & 1\\
1 & 0 & -1\\
0 & 0 & 1\\
\end{bmatrix},
\lambda_{1}=
\begin{bmatrix}
0 & 1\\
1 & 0\\
\end{bmatrix},
\lambda_{2}=
\begin{bmatrix}
1 \\
\end{bmatrix}
\]
This spectral decomposition can be regarded as
separation of connected graph $f$ on $\{c_{1},c_{2},c_{3}\}$ into $\{c_{1},c_{2}\}$ and $\{c_{3}\}$.
\begin{figure}[h]
 \begin{center}
  \includegraphics[bb=0 0 603 228,width=0.5\columnwidth]{./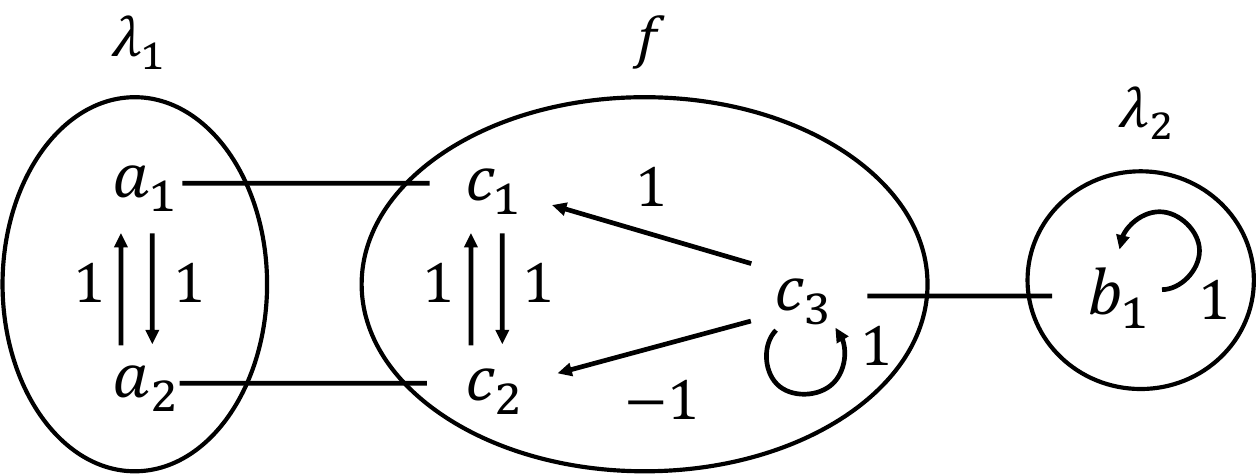}
 \end{center}
\end{figure}
\end{example}

\section{semiadditive $\cmon$-functors}\label{sec:functor}

\begin{theorem}\label{thm:functor}
Let $(C,+,0,\oplus,\pi,\iota),(C',+',0',\oplus',\pi',\iota')$ be 
semiadditive $\cmon$-categories.
For a functor $F\colon\,C\to C'$, the following are equivalent.
\begin{enumerate}
\item For any objects $x,y$ in $C$, the arrow part of $F$ is a monoid homomorphism from $(C(x,y),+_{x,y},0_{x,y})$ to 
$(C'(Fx,Fy),+'_{Fx,Fy},0'_{Fx,Fy})$, that is to say, 
\begin{itemize}
\item For any arrows $f\colon\,x\to y$, $g\colon\,x\to y$, $F(f+_{x,y}g)=Ff+'_{Fx,Fy}Fg$ and
\item $F0_{x,y}=0'_{Fx,Fy}$.
\end{itemize}
\item $F$ preserves the zero object and the binary biproducts, that is to say,
\begin{itemize}
\item for any zero object $c$ of $C$, $Fc$ is a zero object of $C'$ and
\item there exists a natural isomorphism
$\gamma\colon\,F(-\oplus-)\to (F-)\oplus'(F-)$ such that
for any objects $c_{1},c_{2}$ of $C$, 
$\pi'_{1}\circ\gamma_{c_{1},c_{2}}=F\pi_{1}$,
$\pi'_{2}\circ\gamma_{c_{1},c_{2}}=F\pi_{2}$,
$\iota'_{1}=\gamma_{c_{1},c_{2}}\circ F\iota_{1}$, and
$\iota'_{2}=\gamma_{c_{1},c_{2}}\circ F\iota_{2}$.
\end{itemize}
\end{enumerate}
\end{theorem}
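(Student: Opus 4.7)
The plan is to establish the two implications by repeated use of Theorems~\ref{thm:pointedcat} and \ref{thm:semiadditive}, and in particular the formula $f+g=\nabla_y\circ(f\oplus g)\circ\Delta_x$ from condition~1 of Theorem~\ref{thm:semiadditive}.

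For $1\Rightarrow 2$: First, to show $F$ preserves zero objects, I would invoke condition~4 of Theorem~\ref{thm:pointedcat}: if $c$ is a zero object then $0_{c,c}=\id_c$, so $F0_{c,c}=F\id_c$ gives $0'_{Fc,Fc}=\id_{Fc}$, making $Fc$ a zero object. Next, I would \emph{define} $\gamma_{c_1,c_2}\defeq\iota'_1\circ F\pi_1 +' \iota'_2\circ F\pi_2$ and check the four coherence equations by direct computation, using $F$-preservation of composition, identities, zero morphisms, and $+$, together with the equations of Theorem~\ref{thm:semiadditive}.5 applied to the biproduct $Fc_1\oplus' Fc_2$. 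The inverse would be taken to be $\gamma^{-1}_{c_1,c_2}\defeq F\iota_1\circ \pi'_1 +' F\iota_2\circ \pi'_2$; verifying $\gamma\circ\gamma^{-1}=\id$ and $\gamma^{-1}\circ\gamma=\id$ reduces, after distributing, to the fact that $F(\pi_i\circ\iota_j)$ equals $\id$ when $i=j$ and $0'$ otherwise, plus condition~(e) of Theorem~\ref{thm:semiadditive}.5. Naturality of $\gamma$ follows by composing with $\pi'_i$: both $\pi'_i\circ\gamma_{d_1,d_2}\circ F(f_1\oplus f_2)$ and $\pi'_i\circ(Ff_1\oplus' Ff_2)\circ\gamma_{c_1,c_2}$ reduce to $Ff_i\circ F\pi_i$ by functoriality and the coherences, so equality holds by product universality.

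For $2\Rightarrow 1$: To show $F0_{x,y}=0'_{Fx,Fy}$, I would factor $0_{x,y}=\sigma_y\circ\tau_x$ through a chosen zero object $c$ of $C$; since $Fc$ is a zero object of $C'$ by hypothesis, the arrows $F\tau_x$ and $F\sigma_y$ must be the unique arrows to and from $Fc$, and their composite is $0'_{Fx,Fy}$. For preservation of $+$, I would apply $F$ to Theorem~\ref{thm:semiadditive}.1, giving $F(f+g)=F\nabla_y\circ F(f\oplus g)\circ F\Delta_x$. The key auxiliary identities are $\gamma_{x,x}\circ F\Delta_x=\Delta'_{Fx}$ and $F\nabla_y\circ\gamma^{-1}_{y,y}=\nabla'_{Fy}$; each is derived from the coherences on $\pi'$ respectively $\iota'$ together with the product or coproduct universality of $Fx\oplus' Fx$ and $Fy\oplus' Fy$. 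Naturality of $\gamma$ then rewrites $F(f\oplus g)=\gamma^{-1}_{y,y}\circ(Ff\oplus'Fg)\circ\gamma_{x,x}$, so inserting these three substitutions telescopes the $\gamma$'s and yields $\nabla'_{Fy}\circ(Ff\oplus'Fg)\circ\Delta'_{Fx}=Ff+'_{Fx,Fy}Fg$, which is the required preservation of addition.

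The main obstacle I anticipate is the careful bookkeeping for naturality of $\gamma$ in the first direction, together with the derivation that $F\Delta$ and $F\nabla$ transport to $\Delta'$ and $\nabla'$ via $\gamma$ in precisely the direction needed for the final composition to cancel cleanly. The individual verifications are routine appeals to biproduct universality in $C'$, but one must track the coherences in exactly the form that makes the telescoping work.
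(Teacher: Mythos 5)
Your proposal is correct and matches the paper's own proof essentially step for step: the same definition $\gamma_{c_1,c_2}=\iota'_1\circ F\pi_1+'\iota'_2\circ F\pi_2$ with inverse $F\iota_1\circ\pi'_1+'F\iota_2\circ\pi'_2$ in one direction, and the same auxiliary identities $\gamma\circ F\Delta_x=\Delta'_{Fx}$ and $F\nabla_y\circ\gamma^{-1}=\nabla'_{Fy}$ combined with naturality of $\gamma$ to telescope $F(f+g)$ in the other. Your naturality check via composition with the projections is slightly more explicit than the paper's one-line appeal to naturality of $\pi,\iota,\pi',\iota'$, but the argument is the same.
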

\begin{proof}

(1. $\Longrightarrow$ 2.)
If $c$ is a zero object of $C$,  then $Fc$ is a zero object of $C'$, because $0'_{Fc,Fc}=F0_{c,c}=F\id_{c}=\id_{Fc}$.
For each pair of objects $c_{1},c_{2}$ of $C$,
we define $\gamma_{c_{1},c_{2}}\colon\,F(c_{1}\oplus c_{2})\to (F c_{1})\oplus'(F c_{2})$ by $\iota'_{1}\circ F\pi_{1}+'\iota'_{2}\circ F\pi_{2}$.
Then, it satisfies the following.
\[
\pi'_{1}\circ\gamma_{c_{1},c_{2}}=\pi'_{1}\circ(\iota'_{1}\circ F\pi_{1}+'\iota'_{2}\circ F\pi_{2})=\pi'_{1}\circ\iota'_{1}\circ F\pi_{1}+'\pi'_{1}\circ\iota'_{2}\circ F\pi_{2}=F\pi_{1}+'0'=F\pi_{1}
\]
\[
\begin{array}{ll}
  & \gamma_{c_{1},c_{2}}\circ F\iota_{1}\\
= & (\iota'_{1}\circ F\pi_{1}+'\iota'_{2}\circ F\pi_{2})\circ F\iota_{1}\\
= & \iota'_{1}\circ F\pi_{1}\circ F\iota_{1}+'\iota'_{2}\circ F\pi_{2}\circ F\iota_{1}\\
= & \iota'_{1}\circ F(\pi_{1}\circ\iota_{1})+'\iota'_{2}\circ F(\pi_{2}\circ \iota_{1})\\
= & \iota'_{1}\circ F(\id_{c_{1}})+'\iota'_{2}\circ F0\\
= & \iota'_{1}\circ \id_{Fc_{1}}+'\iota'_{2}\circ 0'\\
= & \iota'_{1}+'0'\\
= & \iota'_{1}\\
\end{array}
\]
Similarly, 
$\pi'_{2}\circ\gamma_{c_{1},c_{2}}=F\pi_{2}$ and 
$\gamma_{c_{1},c_{2}}\circ F\iota_{2}=\iota'_{2}$
are also satisfied.

$\gamma_{c_{1},c_{2}}$ is an isomorphism, because
$\gamma^{-1}_{c_{1},c_{2}}=F\iota_{1}\circ\pi'_{1}+'F\iota_{2}\circ\pi'_{2}$ is its inverse arrow.
$\gamma$ forms a natural transformation, 
because $\pi,\iota,\pi',\iota'$ are natural.

(2. $\Longrightarrow$ 1.)
Let $x,y$ be objects and let $c$ be a zero object in $C$.
Let $\sigma_{y}\colon\,c\to y$ be the unique arrow by the initial property of $c$.
Let $\tau_{x}\colon\,x\to c$ be the unique arrow by the terminal property of $c$.
Since $F$ preserves zero objects, $Fc$ is a zero object of $C'$, $F\sigma_{y}\colon\,Fc\to Fy$ is the unique arrow by the initial property of $Fc$,
and $F\tau_{x}\colon\,Fx\to Fc$ is the unique arrow by the terminal property of $Fc$.
Therefore, $F0_{x,y}=F(\sigma_{y}\circ\tau_{x})=F\sigma_{y}\circ F\tau_{x}=0_{Fx,Fy}$.

Let $f,g\colon\,x\to y$ be arrows in $C$. 
Since $F$ preserves binary biproducts with a natural isomorphism $\gamma$, we have
\[
\begin{array}{ll}
  & \pi'_{1}\circ\gamma_{Fx,Fx}\circ F\Delta_{x}\\
= & F\pi_{1}\circ F\Delta_{x}\\
= & F(\pi_{1}\circ \Delta_{x})\\
= & F\id_{x}\\
= & \id_{Fx}\\
\end{array}
\]
and $\pi'_{2}\circ\gamma_{Fx,Fx}\circ F\Delta_{x}=\id_{Fx}$ similarly.
Therefore, we have
$\gamma_{Fx,Fx}\circ F\Delta_{x} = \Delta_{Fx}$.

The equation
$\nabla_{Fy}= F\nabla_{y}\circ\gamma^{-1}_{Fy,Fy}$
also holds, because we have
\[
\begin{array}{ll}
  & F\nabla_{y}\circ\gamma^{-1}_{Fy,Fy}\circ\iota'_{1}\\
= & F\nabla_{y}\circ\gamma^{-1}_{Fy,Fy}\circ\gamma_{Fy,Fy}\circ F\iota_{1}\\
= & F\nabla_{y}\circ F\iota_{1}\\
= & F(\nabla_{y}\circ\iota_{1})\\
= & F\id_{y}\\
= & \id_{Fy}\\
\end{array}
\]
and $F\nabla_{y}\circ\gamma^{-1}_{Fy,Fy}\circ\iota'_{2}= \id_{Fy}$ similarly.

Therefore, $F$ preserves $+$ as follows.
\[
\begin{array}{ll}
  & F(f+_{x,y}g)\\
= & F(\nabla_{y}\circ(f\oplus g)\circ\Delta_{x})\\
= & F\nabla_{y}\circ F(f\oplus g)\circ F\Delta_{x}\\
= & F\nabla_{y}\circ\gamma^{-1}_{Fy,Fy}\circ\gamma_{Fy,Fy}\circ F(f\oplus g)\circ F\Delta_{x}\\
= & \nabla_{Fy}\circ\gamma_{Fy,Fy}\circ F(f\oplus g)\circ F\Delta_{x}\\
= & \nabla_{Fy}\circ(Ff\oplus'Fg)\circ\gamma_{Fx,Fx}\circ F\Delta_{x}\\
= & \nabla_{Fy}\circ(Ff\oplus'Fg)\circ\Delta_{Fx}\\
= & Ff+'_{Fx,Fy}Fg
\end{array}
\]
\end{proof}

\begin{definition}\label{def:functor}
For semiadditive $\cmon$-categories $(C,+,0,\oplus,\pi,\iota),(C',+',0',\oplus',\pi',\iota')$,  
a {\bf semiadditive $\cmon$-functor} $F\colon\,(C,+,0,\oplus,\pi,\iota)\to(C',+',0',\oplus',\pi',\iota')$ 
is defined to be a functor $F\colon\,C\to C'$ satisfying 
one (i.e., all) of equivalent conditions of Theorem~\ref{thm:functor}.
\end{definition}

\begin{definition}\label{def:preservation}
For semiadditive $\cmon$-categories $(C,+,0,\oplus,\pi,\iota),(C',+',0',\oplus',\pi',\iota')$,  
a functor $F\colon\,C\to C'$ is said to {\bf preserve all spectral decompositions}, if for any spectral decomposition $\lambda_{1},\lambda_{2}$ of $f\colon\,c\to c$ on $x_{1},x_{2}$ in $(C,+,0,\oplus,\pi,\iota)$, 
 the pair $F\lambda_{1},F\lambda_{2}$ 
 is also a spectral decomposition of $Ff\colon\,Fc\to Fc$ on $Fx_{1},Fx_{2}$
 in $(C',+',0',\oplus',\pi',\iota')$.
\end{definition}

\begin{theorem}
Every semiadditive $\cmon$-functor preserves all spectral decompositions.
\end{theorem}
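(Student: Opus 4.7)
The plan is to exploit characterization~(3) of Theorem~\ref{thm:decom}, which re-expresses a spectral decomposition purely as the existence of four arrows $\rho_{1},\rho_{2},\kappa_{1},\kappa_{2}$ satisfying four equations built out of composition, identities, zero morphisms, and the sum operation~$+$. Since these are exactly the pieces of structure that a semiadditive $\cmon$-functor is designed to preserve (by condition~(1) of Theorem~\ref{thm:functor}, together with ordinary functoriality), the result should follow by direct transport of equations.

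More precisely, suppose $(\lambda_{1},\lambda_{2})$ is a spectral decomposition of $f\colon c\to c$ on $x_{1},x_{2}$ in $(C,+,0,\oplus,\pi,\iota)$. I would first invoke Theorem~\ref{thm:decom}~(3) to obtain arrows $\rho_{1},\rho_{2},\kappa_{1},\kappa_{2}$ witnessing conditions (a)--(d). I would then propose $F\rho_{1},F\rho_{2},F\kappa_{1},F\kappa_{2}$ as candidate witnesses for $(F\lambda_{1},F\lambda_{2})$ being a spectral decomposition of $Ff$ on $Fx_{1},Fx_{2}$ in $(C',+',0',\oplus',\pi',\iota')$, and verify the four conditions of Theorem~\ref{thm:decom}~(3) in $C'$.

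Conditions (a) and (b) follow immediately by applying $F$ to each equation and using that $F$ preserves composition, identities, and zero morphisms (the latter being part of characterization~(1) of Theorem~\ref{thm:functor}); e.g., $F\rho_{1}\circ F\kappa_{1}=F(\rho_{1}\circ\kappa_{1})=F\id_{x_{1}}=\id_{Fx_{1}}$ and $F\rho_{2}\circ F\kappa_{1}=F0_{x_{1},x_{2}}=0'_{Fx_{1},Fx_{2}}$. Conditions (c) and (d) follow by applying $F$ and additionally using that $F$ is a monoid homomorphism on homsets, so that it distributes over $+$: for instance,
\[
F\kappa_{1}\circ F\rho_{1}+'F\kappa_{2}\circ F\rho_{2}
=F(\kappa_{1}\circ\rho_{1}+\kappa_{2}\circ\rho_{2})
=F\id_{c}=\id_{Fc},
\]
and analogously for (d) with the eigenvalue arrows inserted.

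There is no real obstacle, only a sanity check: characterization~(3) of Theorem~\ref{thm:decom} was formulated entirely from the semiadditive $\cmon$-structure (composition, identity, zero, sum), and characterization~(1) of Theorem~\ref{thm:functor} says precisely that $F$ preserves exactly that structure homset-wise, so the entire verification is purely formal. The only point worth flagging is that one must use the equational characterization (3) rather than characterizations (1) or (2) of Theorem~\ref{thm:decom}, since a direct argument through preservation of (co)products would force us to invoke the natural isomorphism $\gamma$ from Theorem~\ref{thm:functor}~(2) and compare projections and injections up to this isomorphism, which is less economical than the equational route.
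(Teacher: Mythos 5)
Your proposal is correct and follows essentially the same route as the paper's proof: both pass through the equational characterization (3) of Theorem~\ref{thm:decom}, take $F\rho_{1},F\rho_{2},F\kappa_{1},F\kappa_{2}$ as witnesses, and verify conditions (a)--(d) using functoriality together with the homset monoid-homomorphism property from Theorem~\ref{thm:functor}(1). Nothing is missing.
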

\begin{proof}
Assume $F\colon\,(C,+,0,\oplus,\pi,\iota)\to(C',+',0',\oplus',\pi',\iota')$ 
is a semiadditive $\cmon$-functor
and a pair $\lambda_{1},\lambda_{2}$ is a spectral decomposition
 of $f\colon\,c\to c$ on $x_{1},x_{2}$ in $(C,+,0,\oplus,\pi,\iota)$. 
By Definition~\ref{def:decom}, 
there exist arrows 
$\rho_{1}\colon\,c\to x_{1},\rho_{2}\colon\,c\to x_{2},\kappa_{1}\colon\,x_{1}\to c,\kappa_{2}\colon\,x_{2}\to c$ such that
\begin{enumerate}
  \item $\rho_{1}\circ\kappa_{1}=\id_{x_{1}}$, $\rho_{2}\circ\kappa_{1}=0_{x_{1},x_{2}}$,
  \item $\rho_{1}\circ\kappa_{2}=0_{x_{2},x_{1}}$, $\rho_{2}\circ\kappa_{2}=\id_{x_{2}}$,
  \item $\kappa_{1}\circ\rho_{1}+_{c,c}\kappa_{2}\circ\rho_{2}=\id_{c}$, and
  \item $\kappa_{1}\circ\lambda_{1}\circ\rho_{1}+_{c,c}\kappa_{2}\circ\lambda_{2}\circ\rho_{2}=f$.
 \end{enumerate}
 
By Definition~\ref{def:functor}, the arrow part of $F$ is a monoid homomorphism.  
Therefore, 
 the pair $F\lambda_{1},F\lambda_{2}$ 
 is also a spectral decomposition of $Ff\colon\,Fc\to Fc$ on $Fx_{1},Fx_{2}$
 in $(C',+',0',\oplus',\pi',\iota')$, since $F\rho_{1}\colon\,Fc\to Fx_{1},F\rho_{2}\colon\,Fc\to Fx_{2},F\kappa_{1}\colon\,Fx_{1}\to Fc,F\kappa_{2}\colon\,Fx_{2}\to Fc$ satisfying the following equations.
\begin{enumerate}
  \item $F\rho_{1}\circ F\kappa_{1}=F(\rho_{1}\circ \kappa_{1})=F\id_{x_{1}}=\id_{Fx_{1}}$, \\
  $F\rho_{2}\circ F\kappa_{1}=F(\rho_{2}\circ\kappa_{1})=F0_{x_{1},x_{2}}=0_{Fx_{1},Fx_{2}}$,
  \item $F\rho_{1}\circ F\kappa_{2}=F(\rho_{1}\circ \kappa_{2})=F0_{x_{2},x_{1}}=0_{Fx_{2},Fx_{1}}$, \\
  $F\rho_{2}\circ F\kappa_{2}=F(\rho_{2}\circ\kappa_{2})=F\id_{x_{2}}=\id_{Fx_{2}}$,
  \item $F\kappa_{1}\circ F\rho_{1}+_{Fc,Fc}F\kappa_{2}\circ F\rho_{2}=F(\kappa_{1}\circ\rho_{1})+_{Fc,Fc}F(\kappa_{2}\circ\rho_{2})=F(\kappa_{1}\circ\rho_{1}+_{c,c}\kappa_{2}\circ\rho_{2})=F\id_{c}=\id_{Fc}$, and
  \item $F\kappa_{1}\circ F\lambda_{1}\circ F\rho_{1}+_{Fc,Fc}F\kappa_{2}\circ F\lambda_{2}\circ F\rho_{2}=F(\kappa_{1}\circ\lambda_{1}\circ\rho_{1})+_{Fc,Fc}F(\kappa_{2}\circ\lambda_{2}\circ\rho_{2})=F(\kappa_{1}\circ\lambda_{1}\circ\rho_{1}+_{c,c}\kappa_{2}\circ\lambda_{2}\circ\rho_{2})=Ff$.
\end{enumerate}
\end{proof}

\begin{example}
We give a semiadditive $\cmon$-functor for Example~\ref{exam:4-spectral} of $B_4=\{0,a,b,1\}$.
Let $F$ map a set $X$ to $X$ itself and $F$ map a 4-relation $f$
to a relation $F(f)=\{(x,y)\mid a\leq f(x,y)\}$.
That is to say, $F$ extracts only edges of the label $a$ or $1$ from a given 4-relation.

This $F$ is a semiadditive $\cmon$-functor from $\rell{B_4}$ to $\rel$.
This $F$ maps $\lambda'_{1}+\lambda_{1}$, $f_{2}+f_{1}$, and $\lambda'_{2}+\lambda_{2}$ of Example~\ref{exam:4-spectral}
to the following diagram.  
\begin{figure}[h]
 \begin{center}
  \includegraphics[bb=0 0 499 366,width=0.5\columnwidth]{./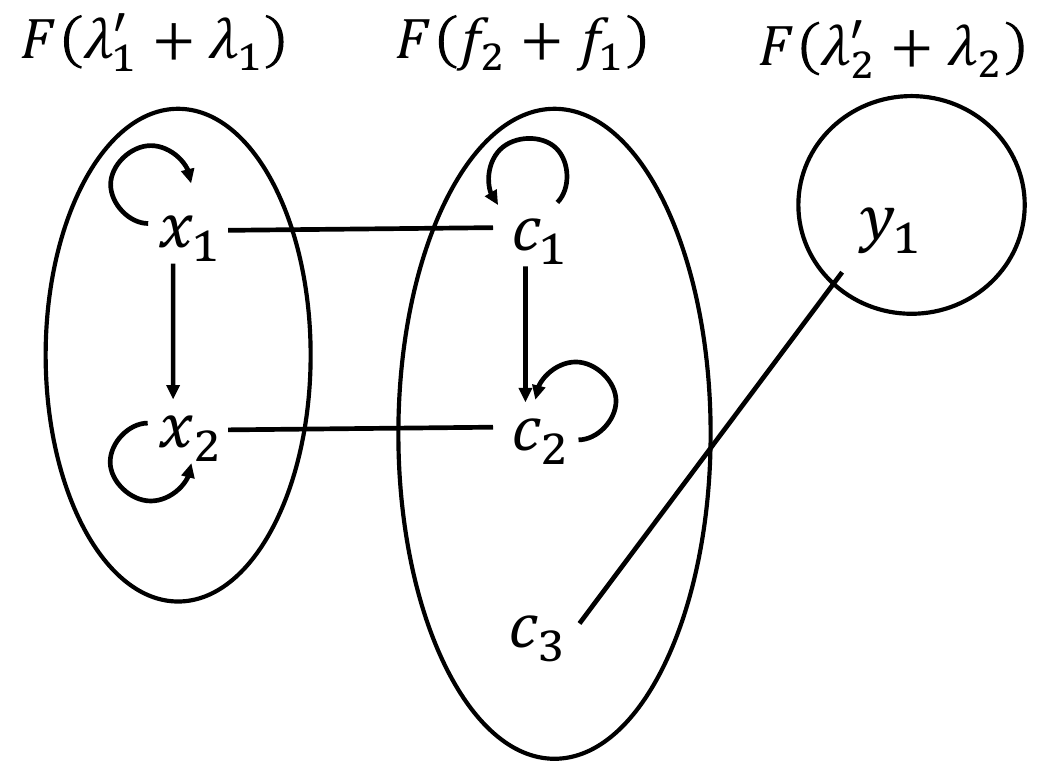}
 \end{center}
\end{figure}
\end{example}

\section{Conclusion and future work}\label{sec:conclusion}

In this paper, we gave several equivalent characterizations for a category with finite biproducts and the sum operation of arrows, and called categories satisfying these semiadditive $\cmon$-categories. 
This allowed us to give equivalent structures without directly confirming the existence of biproducts.
Moreover, we defined a generalized notion of the spectral decomposition in semiadditive $\cmon$-categories.
We also defined the notion of a semiadditive $\cmon$-functor that preserves the spectral decomposition of arrows.
Semiadditive $\cmon$-categories and semiadditive $\cmon$-functors
include many examples.

It is future work to construct projections, injections, and spectral decompositions for a given homogeneous arrow.

\bibliographystyle{splncs}
\bibliography{paper}

\end{document}